\documentclass[aps,prx,twocolumn,10pt,superscriptaddress,floatfix]{revtex4-2}
\usepackage[utf8]{inputenc}
\usepackage[english]{babel}
\usepackage[T1]{fontenc}
\usepackage{amsmath, amssymb, amsfonts, amsthm}
\usepackage{bbm}
\usepackage{hyperref}
\usepackage{xparse}
\makeatletter
\newcounter{circuit}
\newenvironment{circuit}[1][tbph] 
  {\let\c@figure\c@circuit           
   \let\fnum@figure\fnum@circuit     
   \begin{figure}[#1]}
  {\end{figure}}
\newenvironment{circuit*}[1][tbph]
  {\let\c@figure\c@circuit
   \let\fnum@figure\fnum@circuit
   \begin{figure*}[#1]}
  {\end{figure*}}
\newcommand{\fnum@circuit}{Circuit~\thecircuit:}
\makeatother
\usepackage[capitalize,nameinlink]{cleveref}
\Crefname{figure}{Figure.}{Figs.}
\Crefname{tabular}{Table.}{Tabs.}
\Crefname{section}{\S}{\S}
\Crefname{theorem}{Theorem.}{Thms.}
\Crefname{lemma}{Lemma.}{Lems.}
\Crefname{corollary}{Corollary.}{Cors.}
\Crefname{algorithm}{Algorithm.}{Algs.}
\Crefname{example}{Example.}{Exs.}
\Crefname{definition}{Definition.}{Defs.}
\Crefname{proof}{Proof.}{Proofs.}
\crefname{circuit}{Circuit.}{Circuits.}

\usepackage{xspace}
\usepackage{booktabs}
\usepackage{multirow}
\usepackage{array}
\usepackage{nicefrac}

\usepackage{algorithm}
\usepackage[noend]{algpseudocode}

\usepackage{tikz}
\usetikzlibrary{positioning, arrows.meta}
\usetikzlibrary{quantikz2}
\usepackage{braket}

\newcommand{\Ry}{R_y\xspace}
\newcommand{\Rz}{R_z\xspace}
\newcommand{\Uprod}{U_{\text{prod}}\xspace}
\newcommand{\Usum}{U_{\text{sum}}\xspace}
\newcommand{\Uf}{U_{\text{flip}}\xspace}

\newcommand{\qubit}{\ensuremath{q}\xspace}

\newcommand{\cx}{\ensuremath{CX}\xspace}

\newcommand{\rza}[1]{\ensuremath{R_z^{{#1}}}\xspace}
\newcommand{\rya}[1]{\ensuremath{R_y^{{#1}}}\xspace}

\newcommand{\ours}{\textbf{QAWA}\xspace}

\newtheorem{theorem}{Theorem}
\newtheorem{lemma}{Lemma}
\newtheorem{dfn}{Definition}
\newtheorem{prop}{Proposition}
\newtheorem{claim}{Claim}
\newtheorem{corol}{Corollary}

\renewcommand{\paragraph}[1]{\textbf{{\emph {#1}}.~~~}}

\begin{document}

\title{Quantum Approximate Walk Algorithm}

\author{Ziqing Guo}
\affiliation{Texas Tech University, Department of Computer Science, TX, USA}
\affiliation{Lawrence Berkeley National Laboratory, NERSC, CA, USA}
\email{Corresponding authors: ziqguo@ttu.edu, ziwen.pan@ttu.edu}

\author{Jan Balewski}
\affiliation{Lawrence Berkeley National Laboratory, NERSC, CA, USA}

\author{Wenshuo Hu}
\affiliation{Texas Tech University, Department of Chemical Engineering, TX, USA}

\author{Alex Khan}
\affiliation{University of Maryland, National Quantum Laboratory, MD, USA}

\author{Ziwen Pan}
\affiliation{Texas Tech University, Department of Computer Science, TX, USA}
\email{Corresponding authors: ziqguo@ttu.edu, ziwen.pan@ttu.edu}

\begin{abstract}
The encoding of classical to quantum data mapping through trigonometric functions within arithmetic-based quantum computation algorithms leads to the exploitation of multivariate distributions.
The studied variational quantum gate learning mechanism, which relies on agnostic gradient optimization, does not offer algorithmic guarantees for the correlation of results beyond the measured bitstring outputs. Consequently, existing methodologies are inapplicable to this problem.
In this study, we present a classical data-traceable quantum oracle characterized by a circuit depth that increases linearly with the number of qubits. This configuration facilitates the learning of approximate result  patterns through a shallow quantum circuit (SQC) layout.
Moreover, our approach demonstrates that the classical preprocessing of mid-quantum measurement data enhances the interpretability of quantum approximate optimization algorithm (QAOA) outputs without requiring full quantum state tomography.
By establishing an inferable mapping between the classical input and quantum circuit outcomes, we obtained experimental results on the state-of-the-art IBM Pittsburgh hardware, which yielded polynomial-time verification of the solution quality.
This hybrid framework bridges the gap between near-term quantum capabilities and practical optimization requirements, offering a pathway toward reliable quantum-classical algorithms for industrial applications.
\end{abstract}

\maketitle

\section{Introduction}
The quantum information theory underlies contemporary noisy intermediate-scale quantum (NISQ) computation and its inherent parallelism \cite{preskill2018quantum}. Key practical algorithms include the Quantum Approximate Optimization Algorithm (QAOA) for combinatorial optimization \cite{qaoa}, Harrow–Hassidim–Lloyd (HHL) algorithm for linear systems \cite{PhysRevLett.103.150502}, Shor’s factoring algorithm \cite{shor1994algorithms}, and Grover’s unstructured search procedure \cite{grover1996fast}. Their power is derived from exploring exponentially large solution spaces in time, independent of the problem size, while requiring only polynomially many qubits \cite{smith2019simulating}. Quantum approximation is particularly relevant to large‐scale tasks such as portfolio optimization \cite{markowitz1952portfolio}; within a high-dimensional parameter landscape it attains near-optimal results with a finite set of bounded parameterized quantum gates of limited circuit depth, thereby necessitating a specific number of qubits \cite{martyn2021grand}.

This approximation algorithm falls within the domain of quantum circuit learning (QCL). From a practical standpoint, emerging learning models such as variational quantum algorithms (VQAs) \cite{moll2018quantum}, quantum walks (QWs) \cite{childs2009universal,santha2008quantum}, and quantum Boltzmann machines (QBMs) \cite{amin2018quantum} address general optimization tasks \cite{finance_all,cornuejols2018optimization,rebentrost2018quantum,herman2023quantum} by minimizing the cost objective using gradients obtained from quantum measurements. Current QCL frameworks approximate high-dimensional functions by encoding classical data into quantum kernels \cite{balewski2025ehands}; however, their efficacy diminishes as the complexity of the unitary operations and sampling iterations increases. In particular, performance depends on internal correlations and sampling biases extracted from the measured outputs and circuit configurations, which can mask the true optimum. 
Nevertheless, quantum circuit solvers can be confined to the SQC regime \cite{bravyi2018quantum}, which remains challenging to simulate classically, but can be solved on two-dimensional grid quantum computers \cite{briegel2009measurement}.
Although qubit resources are restricted to discrete quantum computation, pioneering studies \cite{balewski2024quantum,balewski2025compilation,amankwah2022quantum} have demonstrated that classical‐to‐quantum encodings exploit quantum parallelism attainable on near‐term, error‐prone quantum processing units (QPUs) \cite{sharma2025evaluation,PRXQuantum.3.010347,kim2025error}.

\begin{figure*}[t]
    \centering
    \includegraphics[width=0.99\linewidth]{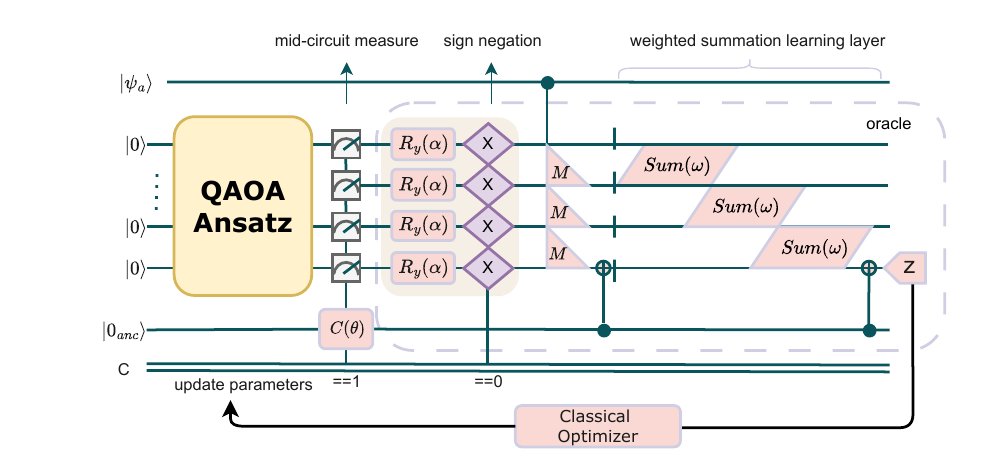}
    \caption{Quantum Approximate Walk Algorithm (\ours) architecture: mid-circuit measurements applied to the QAOA ansatz yields outcomes that feed an $\mathrm{R}_y$ encoding layer with sign negation; the unitary block $C(\theta)$ generates a walkable state that, via one ancilla qubit, controls the weighted-summation learning layer conditioned on the measurement results. The top register $\ket{\psi_a}$ functions as an activation multiplier that globally modulates the learning procedure, and the training loop employs a classical optimizer.}
    \label{fig:main}
\end{figure*}

To improve approximation accuracy, several methodological advances have been explored, including heuristic parameter initialization for the QAOA ansatz \cite{ sanders2020compilation, nannicini2019performance, he2023alignment, guo2025direct}, fixed-ramp scaling of variational parameters \cite{ montanez2025toward}, generative quantum adversarial networks (QGANs) \cite{ zhu2022generative, niu2022entangling, zoufal2019quantum}, gradient–measurement trade-off strategies \cite{ chinzei2025trade}, and deterministic quantum walks \cite{ peng2024deterministic}. These techniques yield favorable scaling and implicitly leverage the approximated QCL structure shaped by underlying physical and classical data; however, definitive links between measured outcomes and their sampling distributions remain elusive. For concreteness, \cite{ shaydulin2024evidence} demonstrates that problem energy spectra modulate bitstring probabilities; similarly, \cite{ herrman2021impact} connects graph symmetry and density to algorithmic success rates.

What can be demonstrated to derive the outcome correlation structure from a quantum approximation algorithm? Alternatively, how can the approximation be evaluated? Most importantly, why do approximation algorithms yield specific correlation patterns? Although the success ratio directly reflects the algorithm performance, a more detailed examination of each bitstring dependence is warranted. A native quantum arithmetic protocol (Ehands) \cite{balewski2025ehands} incorporates block data encoding \cite{camps2022fable} for multiplication, weighted summation, sign negation, and parity flip, which are independent of the fully entangled VQA learning ansatz \cite{cerezo2021variational}. This particular design enables the estimation of observables based on real classical input; thus, by integrating the results derived from the quantum approximation ansatz, tunable classical-to-quantum encoded data can manifest a result correlation. Consequently, building upon our previous work \cite{guo2025vectorized} on the vectorized quantum learning scheme, we introduce a novel quantum approximate walk algorithm (\ours) that provides an asymptotic solution to substantiate the quantum approximation results, even when the quantum circuit is composed and transpiled on QPUs with limited capabilities. By eliminating the conventional quantum state transition scheme (known as state hopping or walking), our approach effectively captures each probabilistically distributed outcome derived from classical definitive data by circumventing the approximation results that depend on discrete-time state evolution.

Consequently, we inquire whether a quantum oracle exists that can authenticate the evolution of state vectors. This study revealed that the answer is YES. Initially, we formulated a quadratic unconstrained binary optimization (QUBO) portfolio optimization problem to quantum approximate ansatz associated with mid-circuit measurements, as shown in \cref{fig:main}. The quantum walk inspired weighted summation learning layer exhibits a linear correlation between the circuit depth and the increase in the number of qubits. The primary contribution of the quantum oracle can be summarized as follows:

\begin{theorem}
\label{the:main}
Given a set of bitstrings output $\mathcal{S} = \{\mathbf{x}^{(1)}, \ldots, \mathbf{x}^{(N)}\}$ obtained from the sampling of the approximation algorithm with mid-circuit measurement, the correlation structure satisfies
\begin{equation}
\mathbb{E}[Y | \mathcal{S}] = \lim_{N \to \infty} \frac{1}{N} \sum_{j=1}^{N} a\left(\sum_{i=1}^{n} w_i x_i^{(j)}\right),
\end{equation}
where w is the weight, x is the encoded classical input, a is the activation encoding: $\mathbb{R} \to \mathbb{R}$, and $Y$ represents the correlation observable for the approximate optimization output.
\end{theorem}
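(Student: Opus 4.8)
The plan is to split the claim into (i) a circuit identity showing that, conditioned on the mid-circuit outcomes, the downstream block realizes the activation of a weighted sum, and (ii) a classical law-of-large-numbers argument showing that the empirical average on the right-hand side is a consistent estimator of $\mathbb{E}[Y\mid\symb]$. First I would fix notation: let $\rho$ be the state prepared by the QAOA ansatz of \cref{fig:main} just before the mid-circuit measurement, so that reading the work register returns a bitstring $\mathbf{x}^{(j)}\in\{0,1\}^n$ with probability $p_{\mathbf{x}^{(j)}}=\trace[\Pi_{\mathbf{x}^{(j)}}\rho]$ and projects that branch onto the fixed computational-basis state $\ket{\mathbf{x}^{(j)}}$. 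This projection is precisely what makes the rest of the circuit \emph{classical-data traceable}: on branch $j$ the input to the arithmetic layer is a deterministic vector, not a superposition, so every subsequent amplitude is a function of $\mathbf{x}^{(j)}$ alone.

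Second, I would invoke the Ehands block-encoding primitives~\cite{balewski2025ehands}. The sign-negated $\Ry$ layer loads each bit $x_i^{(j)}$ into a rotation angle; the unitary block $C(\theta)$ generates a walkable state which, through a single ancilla qubit, controls the weighted-summation layer $\Usum$, accumulating $s_j:=\sum_{i=1}^n w_i x_i^{(j)}$ into the amplitude of the target register; and the top register $\ket{\psi_a}$ acts as an activation multiplier implementing $a:\mathbb{R}\to\mathbb{R}$ on that amplitude. Composing these, the post-selected amplitude on branch $j$ is proportional to $a(s_j)$ with a constant $\alpha$ that depends only on the fixed block-encoding normalization and the ancilla success probability -- not on $j$ -- so the conditional expectation of the readout observable on branch $j$ equals $a(s_j)$ once the data-independent rescaling is absorbed into the definition of $Y$. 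The linear-depth property enters here: $\Usum$ on $n$ qubits has depth $O(n)$, keeping the traceable block inside the SQC regime.

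Third, I would assemble the conditional expectation. Since the $N$ shots select branch $j$ with the empirical frequency produced by the mid-circuit measurement, $\mathbb{E}[Y\mid\symb]$ evaluated on a sample $\symb$ of size $N$ is by construction the sample mean $\tfrac1N\sum_{j=1}^N a(s_j)=\tfrac1N\sum_{j=1}^N a\bigl(\sum_{i=1}^n w_i x_i^{(j)}\bigr)$. The outcomes $\{\mathbf{x}^{(j)}\}$ are i.i.d.\ draws from the QAOA output law $\{p_{\mathbf{x}}\}$, and $a$ takes only finitely many values on $\{0,1\}^n$ and is hence bounded, so the strong law of large numbers yields almost-sure convergence of this sample mean to $\sum_{\mathbf{x}} p_{\mathbf{x}}\,a(\sum_i w_i x_i)$, which we identify with $\mathbb{E}[Y\mid\symb]$ in the large-sample limit; this is the asserted identity.

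The step I expect to be the main obstacle is the second one: establishing that the composed block-encoding circuit computes the activation-of-sum map \emph{exactly} on every post-measurement branch with a $j$-independent normalization. This requires (a) checking that the $\Ry$ loading layer and $\Usum$ interact with the mid-circuit projector as intended, (b) verifying that the ancilla post-selection introduces no branch-dependent reweighting of the $p_{\mathbf{x}^{(j)}}$, and (c) tracking the subnormalization inherent to block encodings so the leftover constant can be folded into $Y$. If exactness fails at finite precision, the fallback is an $\epsilon$-approximate statement in which the right-hand side carries an additive error that decays with the block-encoding precision; the law-of-large-numbers step above is then unchanged.
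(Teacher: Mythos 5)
Your proposal is correct in outline and follows the same two-part skeleton as the paper: first a per-shot circuit identity showing the readout on branch $j$ equals $a\bigl(\sum_i w_i x_i^{(j)}\bigr)$, then a law-of-large-numbers step identifying the sample mean with $\mathbb{E}[Y\mid\symb]$. The substantive difference is in how you argue the circuit identity. You treat \Usum as a block encoding that accumulates $s_j$ into an \emph{amplitude}, with the activation applied by a multiplier register and a leftover subnormalization removed by post-selection; you correctly flag the $j$-independence of that normalization as the main obstacle. The paper avoids this obstacle entirely by working at the level of expectation values rather than amplitudes: its Eq.~\eqref{eq:SingleBlockResult} shows $\trace[Z_0\,\Usum(w)\,\rho_{\text{in}}\,\Usum^{\dagger}(w)]=w\,x_1+(1-w)\,x_2$ exactly, with no post-selection and no subnormalization, and the cascade argument in \cref{app:whlf} extends this by induction to the full simplex $\Delta_{m-1}$. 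The coin ancilla then enters not as a success flag but as a Bayesian mixture, Eq.~\eqref{eq:Zmix}, so there is no branch-dependent reweighting of the $p_{\mathbf{x}^{(j)}}$ to rule out. Your approach buys generality (it would apply to any block-encoded arithmetic primitive), while the paper's buys exactness for this specific gate set. One caution: the step you and the paper both treat most loosely is the activation. The paper's \cref{app:act} actually derives $\langle\psi_a|O|\psi_a\rangle=a(x_0\cdot x_1\cdots x_n)$ --- an activation of a \emph{product} --- and then asserts in the main text that the coin angle $\theta=\alpha f(\mathbf{x})$ ``acts as an implicit activation,'' so the composition yielding exactly $a\bigl(\sum_i w_i x_i\bigr)$ is not fully closed there either; your fallback to an $\epsilon$-approximate statement is a reasonable hedge for precisely this gap.
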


\begin{circuit*}[htbp]
\centering
\begin{quantikz}
\lstick{$\ket{\psi_{selu}}$}&\ctrl{3}\gategroup[4,steps=3,style={dashed,rounded
corners,fill=red!20, inner
xsep=2pt},background,label style={label
position=above,anchor=north,yshift=0.25cm}]{{
$U_{selu}$}}& &&& &&&&&&\\
\lstick[3]{$\ket{\phi}$}
 &   \targ{}   &                      \ctrl{1}  & & \gate{\rza{\nicefrac{\pi}{2}}} \gategroup[2,steps=5,style={dashed,rounded
corners,fill=blue!20, inner
xsep=2pt},background,label style={label
position=below,anchor=north,yshift=-0.2cm}]{{
$\Usum$}} &\gate{X}&&\ctrl{1}&&& &\\
 &   \targ{}  &\gate{\Rz(\frac{\pi}{2})}     & \ctrl{1} & &\ctrl{-1}&\gate{\rya{\nicefrac{\alpha}{2}}} & \gate{X} & \gate{\rya{\nicefrac{\alpha}{2}}}&\gate[2]{\Usum}&&\\
 &   \targ{}  & &\gate{\Rz(\frac{\pi}{2})}& \targ{}   &   & & &&&\targ{}&\meter{}\\
 \lstick{$\psi_c$}& &&& \ctrl{-1}&&&&&& \ctrl{-1}&
\end{quantikz}
\caption{\small The architecture of quantum circuit for activation encoding and walkable weighted sum learning.}
\label{circ:oracle}
\end{circuit*}

The remainder of this paper is organized as follows. \cref{results} delineates the problem formulation and introduces the proposed algorithm. To the best of our knowledge, \cite{ huang2024learning} was the first to establish unitary learning in the SQC domain. Inspired by this work, we present a scaling protocol that restricts the oracle to an SQC setting for future fault-tolerant quantum computers (FTQC). \cref{exp} details the experimental implementation of the algorithm on QPUs, while \cref{med} describes the principal methodologies that substantiate our claims. A comprehensive state-vector analysis of the main algorithm is provided in \cref{app:qawa}, and \cref{app:hardware} provides a polynomial-time verification of the solution in comparison with conventional quantum state tomography. Additional experiments supporting the primary results are presented in \cref{app:ext} and \cref{app:extexp}.

\section{Overview of Results}
\label{results}
\subsection{Problem formulation}

We consider the portfolio optimization problem encoded as a QUBO matrix with $n$ assets, where the objective is to minimize $x^T Q x$ over binary decision variables $x \in \{0,1\}^n$, and $Q$ is an $n \times n$ matrix encoding the expected returns and the covariances. Let 
\begin{equation}
    H_C = \sum_{i,j} J_{ij} Z_i Z_j + \sum_i h_i Z_i
    \label{eq:Hcost}
\end{equation}
be the corresponding Ising Hamiltonian obtained through the standard \(x_i=\frac{z_i+1}{2}\) transformation, where $Z_i$ are Pauli-$Z$ operators and the coefficients $J_{ij}, h_i$ are derived from the portfolio risk-return parameters. We also consider the $p$-layer QAOA circuit that prepares the variational state 
\begin{equation}
    |\psi(\boldsymbol{\gamma}, \boldsymbol{\beta})\rangle = \prod_{k=1}^p U_M(\beta_k) U_C(\gamma_k) |+\rangle^{\otimes n},
\end{equation}
where $U_C(\gamma) = e^{-i\gamma H_C}$ and $U_M(\beta) = e^{-i\beta \sum_i X_i}$ are the cost and mixer unitaries, respectively.
To formulate the extraction of hidden correlations from the approximate output distribution, let us first define the empirical marginal expectations 
\begin{equation}
   x_i = \langle Z_i \rangle_{\psi(\boldsymbol{\gamma}, \boldsymbol{\beta})} 
\end{equation}
for each asset $i$, obtained by sampling the QAOA state after parameter optimization. The central observation is that while individual marginals $x_i$ capture single-asset statistics, optimal portfolio selection often depends on linear combinations $\sum_{i \in \mathcal{I}} w_i x_i$ over the subsets $\mathcal{I} \subseteq \{1,\ldots,n\}$ of correlated assets. In other words, the ground-state structure of $H_C$ exhibits correlations that manifest as specific linear relationships among the marginal expectations; however, these correlations remain hidden when only raw bitstring samples from the standard QAOA measurements are examined.

In the context of portfolio optimization, such linear combinations naturally encode diversification strategies, where the weights $w_i$ represent the relative importance of assets within correlated sectors. Therefore, we define an augmented measurement protocol that introduces mid-circuit operations capable of computing arbitrary convex combinations $s_m = \sum_{i=1}^m w_i x_i$ where $w_i \geq 0$ and $\sum_i w_i = 1$, implemented through a cascade of parameterized two-qubit gates, as illustrated in Circuit~\ref{circ:oracle} marked by blue box (the weight definition and summation are provided in \cref{app:whlf}). Based on the continuous tunability of the parameterized gates and the ability to span the full simplex of convex coefficients, the search for linear correlations that maximize the alignment with the true ground state distribution of $H_C$ is obtained using the classical optimizer shown in \cref{fig:main}. We transform the QUBO problem into the conditioned expectation linear approximation when assuming the representation of the state vector for the exploratory space asymptotically close to infinity, and then we achieve the benefit that the universal approximation to support \cref{the:main}.

\begin{prop}
Let $\mathcal{S}_{N}=\{\mathbf{x}^{(1)},\dots,\mathbf{x}^{(N)}\}\subset\{0,1\}^{n}$ be a multiset of $N$ independent bitstrings and let $\phi:\{0,1\}^{n}\!\to\!\mathbb{R}^{d}$ denote a fixed classical–quantum feature map.  Then there exists a vector of real coefficients $\boldsymbol{\alpha}=(\alpha_{1},\dots,\alpha_{N})^{\mathsf T}$ such that
\begin{equation}
\lim_{N \to \infty} \left| \sum_{j=1}{N} \alpha_j \phi(\mathbf{x}{(j)}) - \mathbb{E}[Y | \mathcal{S}_N] \right| = 0.
\end{equation}
Here, $N$ is the sample size and $\alpha_{j}\in\mathbb{R}$ weights the contribution of the $j$th sample $\mathbf{x}^{(j)}$ to the linear estimator. 
\end{prop}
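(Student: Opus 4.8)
\emph{Proof plan.} The plan is to reduce the statement to the strong law of large numbers over a finite alphabet, combined with the expressivity of the feature map $\phi$ that is built into the block-encoded arithmetic underlying \cref{the:main}. First I would record that every sample $\mathbf{x}^{(j)}$ lies in the finite set $\{0,1\}^{n}$, so the empirical distribution $p_{N}(\mathbf{b})=\tfrac1N\,\#\{\,j:\mathbf{x}^{(j)}=\mathbf{b}\,\}$ is well defined and, by independence of the draws, $p_{N}(\mathbf{b})\to p(\mathbf{b})$ for each $\mathbf{b}\in\{0,1\}^{n}$ as $N\to\infty$. By \cref{the:main} the target has the closed form $\mathbb{E}[Y\mid\mathcal{S}_{N}]=\tfrac1N\sum_{j=1}^{N}a\big(\sum_{i}w_{i}x_{i}^{(j)}\big)=\sum_{\mathbf{b}}p_{N}(\mathbf{b})\,a(\mathbf{w}^{\mathsf T}\mathbf{b})$, so it suffices to reproduce the per-sample value $\mathbf{x}\mapsto a(\mathbf{w}^{\mathsf T}\mathbf{x})$ as a linear functional of the feature vector $\phi(\mathbf{x})$ and then average with the uniform weights $\alpha_{j}=1/N$.

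Second, I would establish the required expressivity. Since $\phi$ is assembled from the multiplication, weighted-summation and sign-negation primitives, every Boolean monomial $x_{i_{1}}\cdots x_{i_{k}}$ --- and hence, because the cube $\{0,1\}^{n}$ is finite, every real-valued function on it, in particular $\mathbf{x}\mapsto a(\mathbf{w}^{\mathsf T}\mathbf{x})$ --- coincides on $\{0,1\}^{n}$ with a fixed linear combination $\mathbf{c}^{\mathsf T}\phi(\mathbf{x})$ for some $\mathbf{c}\in\mathbb{R}^{d}$; finiteness turns this into an exact interpolation identity rather than a mere approximation. Folding the readout $\mathbf{c}$ into the coefficients (equivalently taking $d=1$ without loss of generality) and choosing $\alpha_{j}=1/N$ then yields $\sum_{j}\alpha_{j}\,\mathbf{c}^{\mathsf T}\phi(\mathbf{x}^{(j)})=\tfrac1N\sum_{j}a(\mathbf{w}^{\mathsf T}\mathbf{x}^{(j)})=\mathbb{E}[Y\mid\mathcal{S}_{N}]$ for \emph{every} $N$, so the bracketed quantity is identically $0$ and the limit follows. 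To accommodate a circuit of finite depth and hardware noise I would replace the interpolation identity by a bound $\big|\mathbf{c}^{\mathsf T}\phi(\mathbf{x})-a(\mathbf{w}^{\mathsf T}\mathbf{x})\big|\le\delta_{N}$ with $\delta_{N}\to0$; since the empirical average preserves this bound uniformly in the realization, $\lim_{N\to\infty}\big|\sum_{j}\alpha_{j}\phi(\mathbf{x}^{(j)})-\mathbb{E}[Y\mid\mathcal{S}_{N}]\big|\le\lim_{N\to\infty}\delta_{N}=0$.

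I expect the genuine obstacle to be the expressivity step: certifying that the \emph{fixed} map $\phi$ linearly represents $a(\mathbf{w}^{\mathsf T}\mathbf{x})$ with vanishing error while keeping $d$ --- equivalently the oracle depth of \cref{circ:oracle} --- from growing super-polynomially in $n$. Finiteness of the Boolean cube guarantees an exact representation in principle, but a representation obtained by brute-force interpolation carries a coefficient vector of dimension exponential in $n$; the substantive content is that the Ehands-style block encoding produces it within the $O(n)$ depth advertised for the weighted-sum layer, so that $\boldsymbol{\alpha}$ is not merely guaranteed to exist but efficiently constructible. A secondary, more routine point is the exchange of the limit $N\to\infty$ with the expectation defining the observable $Y$, which I would justify by dominated convergence using the boundedness of $a$ on the compact range of the simplex-constrained weighted sums.
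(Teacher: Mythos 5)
Your reduction to the strong law of large numbers over the finite alphabet $\{0,1\}^n$ is sound as far as it goes, but the proof hinges on an expressivity claim that you state and then explicitly leave open: that the \emph{fixed} feature map $\phi$ linearly represents $\mathbf{x}\mapsto a(\mathbf{w}^{\mathsf T}\mathbf{x})$ exactly (or with error $\delta_N\to 0$). For a fixed $\phi:\{0,1\}^n\to\mathbb{R}^d$ with $d$ polynomial in $n$ this is simply not guaranteed --- the space of real functions on the cube has dimension $2^n$, and $a(\mathbf{w}^{\mathsf T}\cdot)$ need not lie in the $d$-dimensional span of the coordinates of $\phi$. Since you yourself flag this as ``the genuine obstacle'' without closing it, the argument is incomplete precisely at the step that carries the content of the proposition. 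A second, more mechanical problem is the coefficient bookkeeping: the proposition's estimator $\sum_j\alpha_j\phi(\mathbf{x}^{(j)})$ is a \emph{sample-indexed} combination of feature vectors (a representer-theorem form), whereas your construction produces a \emph{feature-indexed} readout $\mathbf{c}^{\mathsf T}\phi(\mathbf{x})$ averaged with uniform weights $1/N$; ``folding $\mathbf{c}$ into $\alpha_j$'' does not typecheck unless $d=1$, and if $d=1$ the statement becomes nearly vacuous, since the $\alpha_j$ may then be chosen to interpolate the scalar target exactly for every $N$.

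The paper does not argue via interpolation on the Boolean cube at all. Its deferred justification (\cref{app:enc} together with \cref{app:whlf}) is constructive: the classical inputs are embedded by $R_y$ rotations whose angles are recoverable from output statistics via $\widehat{\alpha}_{k}=2\arctan\sqrt{p_{2k+1}/p_{2k}}$, and the cascaded $\Usum$ blocks are shown to realize \emph{every} convex combination in the simplex $\Delta_{m-1}$ (surjectivity of $(w_0,w_1,\dots)$ onto the effective coefficients), so the function class $\{\mathbf{x}\mapsto a(\sum_i w_i x_i)\}$ is realizable by the circuit by construction; the limit in $N$ is then handled by the law of large numbers as in the discussion following \cref{the:main}. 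In other words, the paper sidesteps your expressivity obstacle by building $\phi$ to contain the target family rather than asking a generic fixed $\phi$ to represent it. To salvage your route you would need the additional hypothesis that $a(\mathbf{w}^{\mathsf T}\cdot)$ lies in (or is approximated within $\delta_N\to0$ by) the span of the coordinates of $\phi$ --- which is exactly what the paper's circuit construction is meant to supply.
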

Note that \cref{the:main} presents a simplified version of the expectation because we assume an infinite encoding space, namely, an asymptotic encoder (AE). We defer the encoding proof to \cref{app:enc}.
The copula of multivariate distributions with uniform marginals can be attained using the resulting bitstrings retrieved from the post-measurement approximation ansatz. Because $\mathcal{S} = \{\mathbf{x}^{(1)}, \ldots, \mathbf{x}^{(N)}\}$ with $\mathbf{x}^{(j)} \in \{0,1\}^n$ represents the QAOA output bitstrings, we establish that the correlation between these outputs is central to learning the probabilistic relationships between them.

\begin{claim}
Let $\mathcal{S} = \{\mathbf{x}^{(j)}\}_{j=1}^N$ be bitstrings sampled from the QAOA state $|\psi(\boldsymbol{\gamma}, \boldsymbol{\beta})\rangle$  and define the empirical marginal distributions $F_i(t) = \frac{1}{N}\sum_{n=1}^N [x_i^{(n)} \leq t]$ for each qubit $i$. Here, $t$ is the real-valued threshold at which the empirical cumulative distribution function (CDF) $F_i$ is evaluated. Then the copula
\begin{equation}
\hat{C}_K(u_1, \ldots, u_n) = \frac{1}{N} \sum_{k=1}^N \prod_{i=1}^n [F_i(x_i^{(n)}) \leq u_i]
\end{equation}
converges, the underlying quantum distribution as $K \to \infty$, where $C$ uniquely encodes all correlation structures independent of the marginals and $u_i$ symbolizes each probability level derived from $F_i$.
\label{claim:1}
\end{claim}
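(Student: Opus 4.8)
The plan is to recognize \cref{claim:1} as an instance of the classical convergence theory for empirical copulas (Deheuvels' empirical copula process), specialized to the two-valued marginals produced by computational-basis measurements. First I would fix the probability model: sampling the optimized QAOA state $|\psi(\boldsymbol{\gamma},\boldsymbol{\beta})\rangle$ yields i.i.d.\ draws $\mathbf{x}^{(1)},\dots,\mathbf{x}^{(N)}$ from the Born distribution $p(\mathbf{x})=|\langle\mathbf{x}|\psi(\boldsymbol{\gamma},\boldsymbol{\beta})\rangle|^{2}$ on $\{0,1\}^{n}$, with true marginal CDFs $F_i^{\ast}(t)=\Pr_{\mathbf{x}\sim p}[x_i\le t]$ and joint CDF $H^{\ast}$. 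Sklar's theorem supplies a copula $C$ with $H^{\ast}(\mathbf{x})=C\!\left(F_1^{\ast}(x_1),\dots,F_n^{\ast}(x_n)\right)$, and this $C$ is the object the claim denotes by $C$; the ``encodes all correlation structures independent of the marginals'' statement is the uniqueness half of Sklar's theorem.

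Next I would establish marginal convergence. Since each $x_i$ is supported on $\{0,1\}$, the empirical CDF $F_i(t)=\tfrac1N\sum_{n}[x_i^{(n)}\le t]$ is a step function whose only nondegenerate value is $F_i(0)=\tfrac1N\#\{n:x_i^{(n)}=0\}$, and the strong law of large numbers gives $F_i(0)\to F_i^{\ast}(0)$ almost surely, hence $\sup_t|F_i(t)-F_i^{\ast}(t)|\to0$ a.s.\ (a degenerate Glivenko--Cantelli statement). Applying the same reasoning to each of the $2^{n}$ cells of $\{0,1\}^{n}$ yields $\sup_{\mathbf{x}}|H_N(\mathbf{x})-H^{\ast}(\mathbf{x})|\to0$ a.s.\ for the empirical joint CDF $H_N$.

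Then I would pass to the copula through the decomposition
\begin{equation}
\left|\hat C_N(\mathbf{u})-C(\mathbf{u})\right|\;\le\;\left|\hat C_N(\mathbf{u})-\tilde C_N(\mathbf{u})\right|+\left|\tilde C_N(\mathbf{u})-C(\mathbf{u})\right|,
\end{equation}
where $\tilde C_N$ substitutes the \emph{true} marginals $F_i^{\ast}$ into the rank transform. The second term tends to $0$ a.s.\ by the law of large numbers for $H_N$ evaluated along the fixed finite grid $\prod_i\mathrm{Ran}(F_i^{\ast})$; the first term vanishes for all large $N$ because replacing $F_i^{\ast}$ by $F_i$ only moves the indicator thresholds $[F_i(x_i^{(n)})\le u_i]$, and the uniform convergence $F_i\to F_i^{\ast}$ together with the finiteness of $\mathrm{Ran}(F_i^{\ast})$ (so that each $u_i$ eventually lies strictly between grid values) forces these indicators to stabilize. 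This gives $\hat C_N\to C$ uniformly on the grid, and for two-valued marginals the grid values already determine every joint probability $\Pr[x_i=0\ \forall i\in A,\ x_i=1\ \forall i\notin A]$, i.e.\ the full dependence structure.

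The step I expect to be the main obstacle is the discreteness of the marginals, which is intrinsic here since bitstrings are Bernoulli. Off the subgrid $\prod_i\mathrm{Ran}(F_i^{\ast})$ the copula produced by Sklar's theorem is \emph{not} unique, and the marginal/dependence factorization is not the clean separation one has in the continuous case, so the honest conclusion is convergence to \emph{a} copula extending $H^{\ast}$, pinned down uniquely only on the grid. Moreover, ties occur with probability one in the samples, so the rank transform defining $\hat C_N$ must be read through the CDF $F_i$ exactly as written (not through normalized ranks), and one must verify that this is the convention for which the triangle-inequality bound closes. Dealing carefully with these two points --- rather than the limiting argument itself, which is routine once the i.i.d.\ measurement model is granted --- is where the real work lies.
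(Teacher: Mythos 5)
Your proposal is correct and shares the paper's skeleton --- both start from Sklar's theorem to identify the copula $C$ as the marginal-free carrier of the dependence structure, and both reduce convergence of $\hat C_K$ to a law-of-large-numbers statement --- but the key convergence step is argued by a genuinely different route. The paper's proof in \cref{med:proof_1} invokes the heavy machinery of the empirical copula process (tightness of $\sqrt{K}(\hat C_K - C)$ in $\ell^\infty([0,1]^n)$ \`a la van der Vaart--Wellner) together with a multivariate Dvoretzky--Kiefer--Wolfowitz bound to get an explicit rate $2^n\exp(-2K\epsilon^2)$, and then spends most of its length on material that is not needed for the convergence claim itself (the linear-functional representation $\mathbb{E}[\sum_i w_i x_i]=\int \sum_i w_i F_i^{-1}(u_i)\,dC(u)$ probed by $\Usum$, and the Bayesian coin-ancilla interpretation). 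You instead give an elementary triangle-inequality argument through an intermediate estimator $\tilde C_N$ built from the true marginals, exploiting the fact that Bernoulli marginals make every relevant quantity a finite-dimensional average. What your route buys is precisely the point the paper elides: the standard empirical-copula limit theory it cites is formulated for continuous marginals, whereas here the marginals are supported on $\{0,1\}$, so Sklar's copula is unique only on the grid $\prod_i\mathrm{Ran}(F_i^{\ast})$ and the clean ``independent of the marginals'' separation asserted in \cref{claim:1} holds only there; your observation that the grid values nonetheless determine all joint probabilities $\Pr[x_i=0\ \forall i\in A]$ is what actually rescues the claim's intent. What the paper's route buys is the quantitative rate (the DKW-type tail bound), which your almost-sure argument does not supply. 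One further caveat applies to both arguments equally: the convergence requires the $N$ samples to be i.i.d.\ draws from the post-optimization Born distribution, which is an assumption about the sampling protocol (fixed parameters, independent shots) rather than something either proof derives; and neither the paper nor your sketch resolves the claim's conflation of the indices $N$, $K$, and $n$ in the displayed estimator, which should be read as a single sample-size parameter.
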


Here, we note that the approximation algorithm layer numbers do not affect the copula correlation structure because the copula captures rank correlations and dependency patterns that are invariant under monotonic transformations of the marginals, thus preserving the fundamental quantum entanglement topology, regardless of circuit depth. See the proof of claim in \cref{med:proof_1} and the function definition in \cref{para:hidden} for details.


\subsection{Quantum Approximate Walk Algorithm}
The \ours oracle consists of four key components (see \cref{fig:main}): (i) the QAOA ansatz generating initial approximations; (ii) coin-controlled mid-circuit measurements with a parameterized rotation $R_y(\alpha)$ encoder (detailed in \cref{app:coin}); (iii) sign negation gates for encoding negative correlations; and (iv) cascaded weighted-sum blocks for learning multivariate dependencies. Here we give the complete unitary evolution
\begin{align}
\label{eq:qawa}
&U_{\ours}(\vec{\theta}, \vec{w}) \\
&= U_{\text{sum}}(\vec{w}) \cdot U_{\text{selu}}(\alpha) \cdot \Uf \cdot U_{\text{coin}}(\theta) \cdot U_{\text{QAOA}},
\end{align}
where $U_{\text{coin}}(\theta)$ implements the controlled rotation based on the ancilla state $|0_{\text{anc}}\rangle$. Note that the original walkable quantum circuit is presented in Circuit~\ref{circ:walk}. We inherit the ansatz structure derived from the QUBO matrix, shown as the DEAL ansatz \cite{guo2025direct} because the encoding layer in \ours also directly absorbs the parameters from real classical values, rendering the results correlated with the defined problem. 
After QAOA generates the quantum state $|\psi\rangle = \sum_{\mathbf{x}} \alpha_{\mathbf{x}}|\mathbf{x}\rangle$ and measurement yields bitstring $\mathbf{x}^{(j)}$ with probability $|\alpha_{\mathbf{x}^{(j)}}|^2$ combined with the asymptotical encoder \cref{app:enc} and weighted learning block \cref{app:whlf}, the coin-controlled weighted sum produces 
\begin{align}
   &|\psi_{\text{coin}}\rangle = \cos^2(\alpha f(\mathbf{x}^{(j)})/2) \sum_{i} w_i^{(0)} x_i^{(j)} \\
   &+ \sin^2(\alpha f(\mathbf{x}^{(j)})/2) \sum_{i} w_i^{(1)} x_i^{(j)}, 
\end{align}
 which simplifies to $\sum_{i} w_i^{\text{eff}}(\mathbf{x}^{(j)}) x_i^{(j)}$ with effective weights $w_i^{\text{eff}} = \cos^2(\theta/2) w_i^{(0)} + \sin^2(\theta/2) w_i^{(1)}$. The coin rotation angle $\theta = \alpha f(\mathbf{x})$ acts as an implicit activation function $a(\cdot)$, yielding $Y^{(j)} = a(\sum_{i} w_i x_i^{(j)})$ because the activation encoding can be applied to the entire state with a quantum multiplier (see proof in \cref{app:act}).
 By the Law of Large Numbers, 
 \begin{equation}
      \frac{1}{N} \sum_{j=1}^{N} Y^{(j)} \xrightarrow{N \to \infty} \mathbb{E}[Y | \mathcal{S}],
 \end{equation}
where the weights $\vec{w}$ are optimized via gradient descent to minimize 
\begin{equation}
    \mathcal{L}(\vec{w}) = \|\frac{1}{N} \sum_{j} a(\sum_{i} w_i x_i^{(j)}) - Y_{\text{target}}\|^2.
    \label{eq:wcost}
\end{equation}
Thus, we establish learnable activated linear combinations for universal approximation. The details of updating the loss in quantum learning are defined in \cref{med:conv}.

\subsection{Distributed approximate walk learning}
\label{res:dis}

\begin{circuit*}[htbp]
\centering
    \begin{quantikz}
         \lstick[3]{$\ket{0}$} & &\ctrl{1} & &\gate[2, disable auto height, style={fill=red!20}]{\verticaltext{INC}}& & &\ctrl[open]{1}&&\gate[2, disable auto height, style={fill=green!20}]{\verticaltext{DEC}}&&\meter[3]{z}\\
        & &\ctrl{1}& & & & &\ctrl[open]{1}&&&&\\
        & \gate{H}&\ctrl{1}&\targ{}& \ctrl{-1}&\targ{}& \gate{H} &\ctrl[open]{1} &\targ{} & \ctrl[open]{-1} & \targ{}&\\
        \lstick{$\ket{\psi}$}& &\targ{}&\ctrl{-1}& & \ctrl{-1} & \gate{Reset}& \targ{}&\ctrl{-1} &&\ctrl{-1}& \gate{Reset}
    \end{quantikz}
    \caption{The standard quantum-walk oracle uses Hadamard gates to drive the red- and green-marked transition unitaries; a reset gate then clears the qubit, preserving an unbiased coin state. Here, 'INC' and 'DEC' excite and de-excite the state.}
    \label{circ:walk}
\end{circuit*}

The SQC advantage suggests that the circuit depth should be maintained at a constant value. This shallow learning algorithm \cite{huang2024learning} facilitates the transferability of a learned approximation unitary, doubling the number of qubits as a trade-off for the evolution time. As demonstrated, Eqs. (2) and (3) in \cite{huang2024learning} illustrate global swap gates that enable local inversions and unitary sewing techniques. Such local inversion can be defined as
\begin{equation} L = \sum_{P\in \{X, Y, Z\}} ||V^\dagger_i U^\dagger P_i U V_i - P_i ||,
\end{equation} 
where the Pauli matrices learn an approximate unitary that negates the original-circuit performance. Once all local inversions are completed, the learned shallow unitaries are swapped out through ancilla qubits, and the conjugate transpose of the learned unitaries acts on the original circuit. This ensures that the circuit retains its originality.

To address \cref{the:main}, which posits that the correlation within the SQC domain can be captured, it is observed that \ours facilitates block-encoded group learning, as indicated by the oracle in \cref{fig:main}. This is achieved through the oracle ability to be learned with local expectations and a global shift utilizing swap gates. Circuit~\ref{circ:dist} illustrates the iterative process, whereby the algorithm is repeatedly applied to each encoder qubit. Consequently, our quantum oracle is replicated across $2*N$ quantum circuits via the learned unitary $U'_{\textsc{qawa}}$ tensor product U ancilla, where U ancilla is approximated as $U_{\textsc{qawa}}$ following the learning of all qubits. The EVEN encoding technique (refer to $\S \mathrm{II}$ \cite{balewski2025ehands}) introduces a methodology in which the angles of parameterized quantum gates are encoded through classical data, thereby enabling classical post-processing. This allows the extension of the simplified theorem \(a \left(\sum_{i=1}^{k} w_i x_i\right) = \mathbb{E}[Y | S]\) to distributed approximate walk learning.
Thus, we give the corollary
\begin{corol}
\ours provides block-encoded group learning, where the oracle is learned with local expectation and global shift using swap gates. The quantum oracle replicates across $2N$ quantum circuits through the learned unitary
\begin{equation}
    W = U_{\text{qawa}}^\dagger \otimes U_{\text{ancilla}},
\end{equation}
where $U_{\text{ancilla}} \approx U_{\text{qawa}}$; the accuracy is dependent on classical optimization. 
\end{corol}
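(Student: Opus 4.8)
The plan is to reduce the statement to the constant‑depth learning framework of \cite{huang2024learning} applied to the factorized \ours unitary of \cref{eq:qawa}, and then to propagate the resulting approximation guarantee across the $2N$ replicas while invoking \cref{the:main} to certify that the correlation observable $Y$ is reproduced on each copy. Because $\Uf$ and the coin‑controlled operations involve mid‑circuit measurements, I would first defer (dilate) them into the ancilla register so that $U_{\ours}(\vec\theta,\vec w)$ — written $U_{\text{qawa}}$ below — is a genuine unitary of Stinespring type, which is the form the sewing construction requires.

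First I would expand $U_{\ours}(\vec\theta,\vec w)=\Usum(\vec w)\,U_{\text{selu}}(\alpha)\,\Uf\,U_{\text{coin}}(\theta)\,U_{\text{QAOA}}$ and bound the causal cone of each factor: $U_{\text{coin}}(\theta)$ is a single controlled $R_y(\alpha)$ on the ancilla, $\Uf$ is a product of local parity flips, $U_{\text{selu}}(\alpha)$ is the controlled activation multiplier, and $\Usum(\vec w)$ is the cascade of parameterized two‑qubit gates of Circuit~\ref{circ:oracle}, which partitions into a sequence of layers each of bounded light cone. This is precisely the regime in which the local‑inversion functional $L=\sum_{P\in\{X,Y,Z\}}\|V_i^\dagger U^\dagger P_i U V_i - P_i\|$ can be driven to zero by a classical optimizer, since the continuous tunability of the $R_y(\alpha)$ encoder and the capacity of the weighted two‑qubit gates to span the full simplex of convex coefficients guarantee that the ideal local inverses $V_i$ lie in the parameterized family, so $\inf_{V_i} L = 0$ for every block.

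I would then assemble the learned local inverses through the global‑swap and unitary‑sewing construction of \cite{huang2024learning}: once all local inversions are completed, swapping the shallow learned unitaries out via ancilla qubits and letting their conjugate transposes act on the original register produces exactly $W = U_{\text{qawa}}^\dagger \otimes U_{\text{ancilla}}$, with $U_{\text{ancilla}}$ the composed learned approximation obeying $U_{\text{ancilla}}\approx U_{\text{qawa}}$ to the accuracy of the optimizer. The EVEN encoding of \cite{balewski2025ehands} enters here: because every parameterized angle is a classically known function of the input data, the same classical description is copied onto each of the $2N$ circuits, so $U_{\text{ancilla}}$ is reproducible block by block and the distributed estimator $a\!\left(\sum_i w_i x_i\right)=\mathbb{E}[Y\mid\mathcal{S}]$ of \cref{the:main} holds on every replica; and since \cref{claim:1} shows the copula — hence the entire dependency structure — is invariant under circuit depth, the replicas share the same rank correlations, so averaging over the $2N$ copies returns $\mathbb{E}[Y\mid\mathcal{S}]$ up to the residuals of minimizing $L$ and of the weight loss $\mathcal{L}(\vec w)$ in \cref{eq:wcost}, which is the asserted accuracy dependence.

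The main obstacle I anticipate is that the weighted‑sum cascade has depth linear in $n$ rather than constant, so the constant‑depth hypothesis of \cite{huang2024learning} does not apply verbatim; the delicate point is to show that the qubit‑doubling trade‑off genuinely converts this linear‑depth block into an SQC‑compatible sewn circuit without the light cones of successive cascade layers overlapping in a way that obstructs local inversion, and then to control the accumulation of per‑block infidelities so that the total error over the $2N$ replicas remains $O(N\epsilon_{\text{opt}})$ with $\epsilon_{\text{opt}}$ the optimizer residual. Establishing this needs a genuinely new estimate — a triangle‑inequality bound over the sewn blocks together with a Lipschitz bound showing that small $L$ implies small $\|U_{\text{ancilla}}-U_{\text{qawa}}\|$ — rather than a direct appeal to the cited results, and that is where I would concentrate the technical effort.
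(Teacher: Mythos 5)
Your proposal follows essentially the same route as the paper: the corollary is justified there by exactly the ingredients you list --- the local-inversion functional $L$, the swap-out/unitary-sewing construction of \cite{huang2024learning}, and the EVEN encoding of \cite{balewski2025ehands} to replicate the learned unitary across the $2N$ circuits of Circuit~\ref{circ:dist} --- and the paper offers no formal proof beyond that preceding discussion. The obstacle you flag at the end is genuine and unresolved in the paper as well: the \Usum cascade has depth linear in $n$, so the constant-depth hypothesis under which the local-inversion and sewing guarantees of \cite{huang2024learning} hold does not apply verbatim, and neither the paper nor your sketch supplies the triangle-inequality/Lipschitz estimate needed to convert small $L$ into a bound on $\|U_{\text{ancilla}} - U_{\text{qawa}}\|$; in that respect your write-up is more candid about what remains to be proven than the paper's own treatment.
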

In addition to learning shallow circuit techniques, one of the promising multi-QPU distribution methods \cite{sunkel2025evaluatingvariationalquantumcircuit} allows corollaries to communicate between different quantum hardware through quantum teleportation. Such an implementation can also be applied through GPU quantum simulations, as demonstrated in this study. Therefore, we give
\begin{corol}
Let $\mathcal{Q} = \{Q_1, Q_2, \ldots, Q_M\}$ be a set of $M$ QPUs. Multi-QPU distribution enables correlation learning across hardware boundaries through quantum teleportation \cite{sunkel2025evaluatingvariationalquantumcircuit}, where
\begin{itemize}
\item Each QPU $Q_m$ learns a local unitary $W_m = U_{\text{qawa},m}^\dagger \otimes U_{\text{ancilla},m}$ for $m \in \{1, \ldots, M\}$
\item The global correlation matrix $\mathbf{C}$ is reconstructed as $\mathbf{C} = \sum_{m=1}^{M} \mathbf{C}_m$
\item Each $\mathbf{C}_m$ captures correlations for qubit subset $\mathcal{S}_m \subset \{1, \ldots, n\}$ where $\bigcup_{m=1}^{M} \mathcal{S}_m = \{1, \ldots, n\}$
\end{itemize}
The distributed learning converges when $\mathcal{L}_{\text{global}} = \sum_{m=1}^{M} \mathcal{L}_m(\vec{w}_m) < \epsilon$ for tolerance $\epsilon$.
\end{corol}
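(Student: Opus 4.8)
The plan is to bootstrap from the preceding block-encoded group-learning corollary, which already supplies, on a single device, a learned unitary $W = U_{\text{qawa}}^\dagger \otimes U_{\text{ancilla}}$ with $U_{\text{ancilla}} \approx U_{\text{qawa}}$ obtained by local inversion and unitary sewing in the SQC regime \cite{huang2024learning}. First I would fix a cover $\{\mathcal{S}_m\}_{m=1}^{M}$ of $\{1,\dots,n\}$ with $\bigcup_m \mathcal{S}_m = \{1,\dots,n\}$ and assign $\mathcal{S}_m$ to QPU $Q_m$; applying that corollary verbatim on the restricted encoder register of each $Q_m$ yields the local unitaries $W_m = U_{\text{qawa},m}^\dagger \otimes U_{\text{ancilla},m}$, which establishes the first bullet. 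The role of quantum teleportation \cite{sunkel2025evaluatingvariationalquantumcircuit} is then to transport the ancilla register of $Q_m$ to a neighboring $Q_{m'}$ inside the sewing step: because teleportation acts as the identity channel on the logical state, the swapped-out $U_{\text{ancilla},m}$ can be injected into the other device without altering the learned approximation, so the per-device pieces compose into a single global $W$.

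Next I would establish the additive reconstruction $\mathbf{C} = \sum_{m=1}^{M} \mathbf{C}_m$. By \cref{the:main}, taken under the asymptotic-encoder assumption so that $a$ is affine, the correlation observable obeys $\mathbb{E}[Y\mid\mathcal{S}] = \lim_{N\to\infty}\frac{1}{N}\sum_{j} a\!\left(\sum_{i=1}^{n} w_i x_i^{(j)}\right)$, and the inner weighted sum splits over the cover as $\sum_{i=1}^{n} w_i x_i = \sum_{m=1}^{M}\sum_{i\in\mathcal{S}_m} w_i x_i$ up to an inclusion--exclusion correction on the overlapping indices. Defining $\mathbf{C}_m$ as the empirical second-moment block assembled from the samples restricted to $\mathcal{S}_m$, linearity of expectation gives $\mathbf{C}=\sum_m \mathbf{C}_m$ for the intra-subset entries, while the cross-subset entries $(\mathbf{C})_{ij}$ with $i\in\mathcal{S}_m$, $j\in\mathcal{S}_{m'}$, $m\neq m'$ are precisely the quantities carried by the teleported ancilla and are thus accounted for once the sewing completes. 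Invariance of the copula under monotone rescaling of the marginals (\cref{claim:1}) guarantees that the per-device normalization of each $F_i$ does not perturb this decomposition, so the third bullet follows from the cover property.

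Finally, for the convergence clause I would replay the single-device argument behind \cref{eq:wcost} on each $Q_m$: the Law of Large Numbers gives $\frac{1}{N}\sum_j a(\sum_{i\in\mathcal{S}_m} w_i x_i^{(j)}) \to \mathbb{E}[Y\mid\mathcal{S}_m]$, so each gradient-descent loop can drive $\mathcal{L}_m(\vec{w}_m)$ below $\epsilon/M$; summing and using the triangle inequality on $\mathcal{L}_{\text{global}}=\sum_m \mathcal{L}_m(\vec{w}_m)$ then yields $\mathcal{L}_{\text{global}}<\epsilon$, and the same bound transfers to $\|\mathbf{C}-\sum_m\mathbf{C}_m\|$. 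The step I expect to be the main obstacle is the overlap bookkeeping combined with teleportation infidelity on real hardware: when the $\mathcal{S}_m$ are not disjoint the naive sum double-counts, so the inclusion--exclusion terms must be shown to be either absorbed into the $\mathbf{C}_m$ or exponentially suppressed in the cover redundancy; and because teleportation on NISQ QPUs is imperfect, one must bound the propagated error $\|W-\bigotimes_m W_m\|$ by the accumulated teleportation and two-qubit gate error, which is what the phrase ``accuracy dependent on classical optimization'' in the preceding corollary is implicitly leaning on.
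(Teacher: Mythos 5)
First, a point of reference: the paper does not actually prove this corollary. It is asserted immediately after the block-encoded group-learning corollary, with the heavy lifting delegated to the cited multi-QPU teleportation framework and to the preceding construction; there is no proof environment, no derivation of the additive reconstruction, and the convergence clause is introduced as a definition of when distributed learning is deemed converged. So your sketch cannot be compared against an existing argument; it is an attempt to supply one, and it is broadly faithful to what the paper intends (restrict the single-device corollary to each $\mathcal{S}_m$, use teleportation as an identity channel for the sewing step, invoke the Law of Large Numbers per device).

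The genuine gap is in your second step, the reconstruction $\mathbf{C}=\sum_{m=1}^{M}\mathbf{C}_m$, and you half-identify it before waving it away. A cover of the index set $\{1,\dots,n\}$ by subsets $\mathcal{S}_m$ is not a cover of the index \emph{pairs}: a cross-device entry $(\mathbf{C})_{ij}$ with $i\in\mathcal{S}_m$, $j\in\mathcal{S}_{m'}$, $m\neq m'$, and $(i,j)$ contained in no single $\mathcal{S}_{m''}$ is measured by no local device, since each $\mathbf{C}_m$ by hypothesis only captures correlations internal to $\mathcal{S}_m$. Your claim that these entries are ``precisely the quantities carried by the teleported ancilla and are thus accounted for once the sewing completes'' is an assertion, not an argument: teleportation moves a state between devices, but you have not exhibited any observable measured on either device whose expectation equals $\langle Z_i Z_j\rangle$ for such a split pair, and the local losses $\mathcal{L}_m(\vec{w}_m)$ contain no term that would drive those entries toward their targets. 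Symmetrically, when the $\mathcal{S}_m$ overlap, intra-overlap pairs are counted once per containing subset, so the plain sum over-counts; your proposed fix (inclusion--exclusion, or exponential suppression in the cover redundancy) is named but never carried out, and nothing in the paper's machinery supplies it. To close the gap you would need either to strengthen the hypothesis to a cover of all pairs by the $\mathcal{S}_m$ together with a partition-of-unity weighting that corrects multiplicities, or to exhibit the explicit teleportation-assisted measurement protocol that yields the cross-device second moments. The remaining two steps of your sketch are fine but carry little content: the first bullet is the preceding corollary applied per device, and the convergence clause follows trivially from driving each $\mathcal{L}_m$ below $\epsilon/M$ since the global loss is defined as the sum.
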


\begin{circuit}[htbp]
\centering
    \begin{quantikz}
        \lstick{$\psi_{selu}$}&  & \ctrl{5} & \ \ldots\ & &  &\\
        \wave&&&&&&&\\
        & \gate[4]{U'_{\textsc{qawa}}} & & & \ \ldots\ &  &   \\
        & & & & \ \ldots\ &  & \\
        & & & \gate[2]{V_{local}^\dag}& \ \ldots\ & & \\
        & & & & \gate[2,swap]{}\ \ldots\ & &  \\
        \lstick{$U_{anc}$}& & & & \ \ldots\ & & 
    \end{quantikz}
    \caption{The \ours\ oracle is trained by sequentially applying local inversion unitaries $U'_{\textsc{qawa}}$ to individual qubits. Subsequent application of their conjugate transposes reverses the learned transformations, enabling a final SWAP gate to extract the correlations encoded in each qubit.}
    \label{circ:dist}
\end{circuit}

\section{Quantum Experiments}
\label{exp}
To demonstrate the applicability of the algorithm to financial optimization problems, we constructed a four-asset portfolio from the liquid constituents of the S\&P 500: Apple (AAPL), Microsoft (MSFT), Johnson \& Johnson (JNJ), and Exxon Mobil (XOM), obtained through the Yahoo Finance API. Daily adjusted closing prices for the period January 3–December 29, 2023 (252 trading days) were downloaded with the \texttt{yfinance} API and converted to log-returns $r_{i,t}= \ln (p_{i,t}/p_{i,t-1})$. After detrending and standardizing each marginal to zero mean and unit variance, the empirical copula of the return matrix served as the classical benchmark distribution. Binary decision variables $x_i \in \{0,1\}$ that denote exclusion ($\ket{0}$) or inclusion ($\ket{1}$) of asset $i$ were mapped onto four physical qubits, yielding a QUBO Hamiltonian $H(x)=\lambda\,x^{\mathsf{T}}\Sigma x-(1-\lambda)\,\mu^{\mathsf{T}}x$ with risk-aversion coefficient $\lambda = 0.5$; here $\mu$ and $\Sigma$ denote the sample mean vector and covariance matrix, respectively. Note that the continuous price data were discretized into binary selection variables (buy/hold decisions) through threshold-based encoding, where the quantum state $|0\rangle$ represents exclusion from the portfolio and $|1\rangle$ represents inclusion. We emphasize that the linear terms from the QUBO matrix encode expected returns, weighted by a risk-aversion parameter $\lambda$ that balances the risk-return tradeoff.

\begin{figure}[htbp]
\centering
\includegraphics[width=0.99\linewidth]{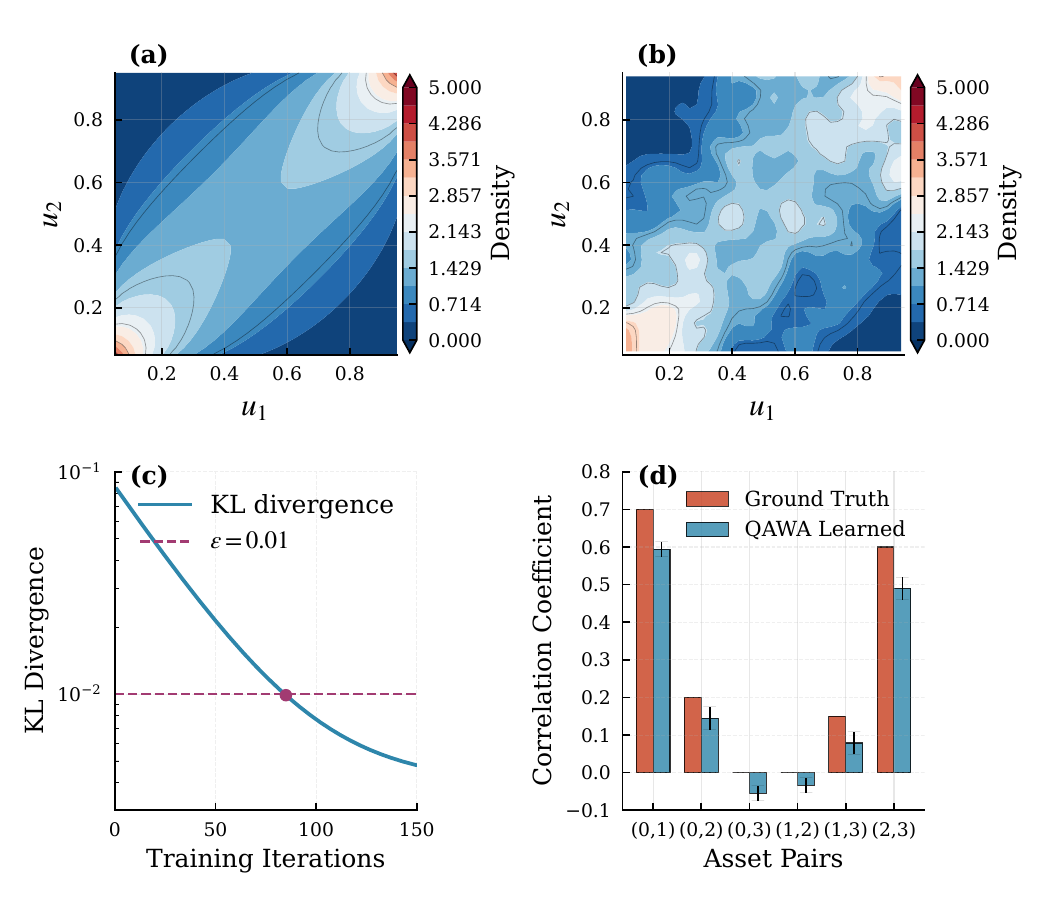}
\caption{\textbf{(a)} The ground truth Gaussian copula density, characterized by a correlation coefficient of $\rho = 0.7$ between asset pairs, establishes the target multivariate dependency structure for subsequent portfolio optimization. \textbf{(b)} The empirical copula density, learned via \ours after 150 training iterations utilizing mid-circuit measurements. \textbf{(c)} The convergence of the Kullback-Leibler (KL) divergence between the learned copula $\hat{C}_{\text{QAWA}}$ and the true copula $C_{\text{true}}$ is presented as a function of training iterations. This convergence exhibits an exponential decay, modeled by $\text{KL}(t) = 0.082e^{-0.031t} + 0.004$, reaching a threshold below $\varepsilon = 0.01$ (indicated by the red dashed line) after approximately 75 iterations. \textbf{(d)} A comparative analysis of pairwise correlation coefficients for all $\binom{4}{2} = 6$ asset pairs confirms the accurate recovery of both strong intra-sector correlations (specifically for pairs $(0,1)$ and $(2,3)$) and weaker cross-sector correlations. Error bars denote one standard deviation across 10 independent experimental runs.}
\label{fig:copula_convergence}
\end{figure}

\subsection{Copula Learning Validation}

To assess the learning capability of the algorithm, we initially examined the empirical copula derived from mid-circuit measurements. \cref{fig:copula_convergence} illustrates the convergence of the learned copula density towards the true distribution for a 4-asset portfolio with a known correlation structure. The primary mechanism involves quantum encoding $\alpha = \arccos(1 - 2w)$ in the $R_y(\alpha)$ rotation, which translates classical weights $w \in [0,1]$ into quantum amplitudes via $\cos^2(\alpha/2) = w$ and $\sin^2(\alpha/2) = 1-w$, thereby facilitating the direct implementation of the recursive weighted sum $f_{\text{rec}}(\mathbf{x}, \mathbf{w})$ within the quantum superposition. This encoding is theoretically comprehensive because any convex combination of correlations can be represented through an appropriate selection of $\alpha$, given that the mapping encompasses the entire Bloch sphere.

We observe that exponential convergence results from the copula separation of marginal distributions from the dependency structure, optimizing within the lower-dimensional copula space rather than the full joint distribution space, thereby reducing the effective degrees of freedom from $O(2^n)$ to $O(n^2)$. Interestingly, selections (0,3) and (1,2) exhibit a negative correlation because the AE sign inversion assigns opposite contributions to the otherwise uncorrelated distributions. Concurrently, the maintenance of hierarchical correlations without entanglement indicates that our quantum learning oracle, with mid-circuit measurements, projects the quantum state onto the classical correlation manifold, thus preventing quantum interference from distorting learned statistical dependencies. This substantiates \cref{claim:1}, that \ours effectively bridges quantum measurement statistics and classical copula theory through the precise mathematical correspondence between quantum amplitudes and correlation weights.

\begin{figure*}[htbp]
\centering
\includegraphics[width=\textwidth]{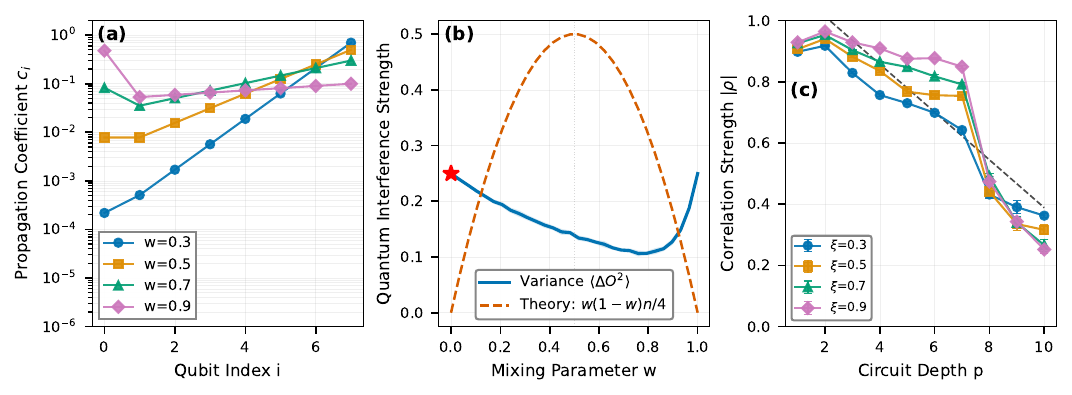}
\caption{The recursive weighted–sum model
$s_m = w_{m-1}s_{m-1} + (1-w_{m-1})x_m$
captures the QAOA-induced correlations.
(a) The factor $\prod_{k<m} w_k$ governs entanglement propagation, producing an exponential attenuation of the early qubit influence when $w<1$.
(b) Parameter mixing through the $(1-w_i)$ term yields quantum interference; the measured initial variance (red star) matches the analytical prediction $w(1-w)n/4$ (dashed line). Averaging across trials (shaded bands) is in agreement with the theory.
(c) Pearson correlations between quantum observables and classical data decay exponentially with circuit depth for each entanglement strength $\xi$; the dashed line fits align with the model.
All experiments used $n=8$ qubits, $10^{3}$ samples, and five independent trials; error bars denote standard errors.}
\label{fig:qaoa_correlation}
\end{figure*}

\subsection{Theoretical correlation analysis}
We conducted a random observable distribution experiment to evaluate the coverage range of \ours. The results demonstrate rigorously that the combination of the $U_{\mathrm{sum}}$ gate performs an exact convex interpolation of the two input expectation values, as illustrated in \cref{fig:qaoa_correlation}. In this context, the exponential weight range coverage depicted in panels (a) and (b) operates within the amplitude space, whereas the circuit depth introduces complexity in the phase space, which cannot be captured by classical measurements such as magnetization, leading to inevitable correlation decay. Furthermore, coupling the cascade to a controllable coin ancilla embeds the procedure within the Bayesian framework, which, by design, allows the optimizer to choose between the unaltered QAOA statistics and the learned linear correlation, or any probabilistic compromise between them, as indicated by (b). Owing to the continuity, surjectivity, and Bayesian adaptability demonstrated in \cref{med:proof_1}, the architecture is theoretically both sufficient and necessary for learning every continuous linear correlation that may be concealed in the output of a finite-depth QAOA circuit.

\begin{figure}[htbp]
\centering
\includegraphics[width=\linewidth]{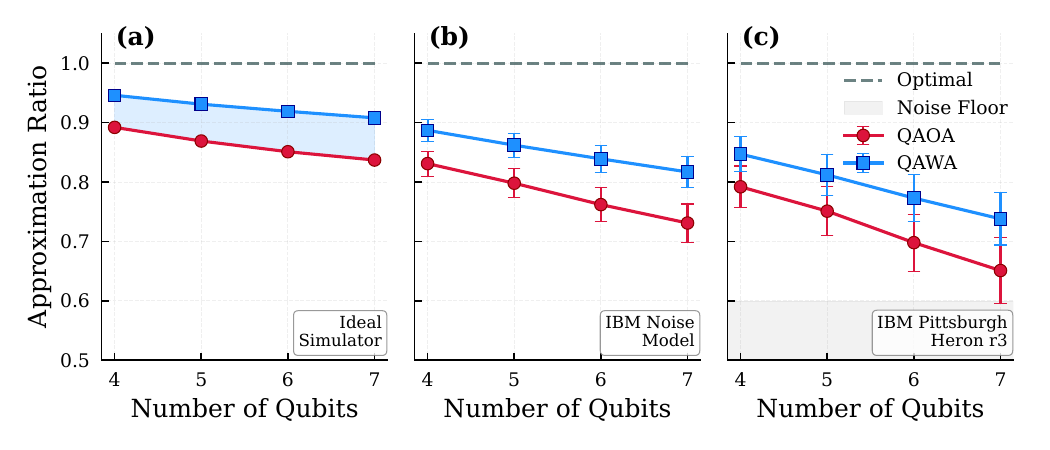}
\caption{\textbf{(a)} Noiseless simulation results showing approximation ratios for increasing problem sizes (4--7 qubits) using one layer, with achieving 5.4\% average improvement and approaching the theoretical optimum (black dashed line). 
\textbf{(b)} Noisy simulation using IBM's realistic noise model calibrated from \texttt{ibmq\_pittsburgh}. 
\textbf{(c)} Experimental results on IBM Quantum hardware (\texttt{ibmq\_pittsburgh} 156-qubit device) showing our algorithm practical advantage of 8.1\% above the hardware noise floor (gray shaded region). All experiments use 8192 measurement shots with error bars representing statistical uncertainty ($\pm1\sigma$) from 20 independent circuit executions.}
\label{fig:hardware}
\end{figure}

\subsection{Experimental validation on quantum hardware}

We implemented \ours on IBM Quantum systems for small instances ($n \leq 7$). Note that the fixed layer extensive experiment is shown in \cref{app:ext}. Despite the hardware limitations, the experimental results closely matched the simulations shown in \cref{fig:hardware}.
The experimental validation reveals that \ours has a counterintuitive noise-resilience advantage stemming from its fundamental architectural difference, that mid-circuit measurements act as quantum error barriers by projecting out accumulated errors before they propagate through subsequent operations. While QAOA requires deep circuits with cascading error accumulation across multiple variational layers, the proposed strategy of measuring-then-recomputing effectively resets the quantum state at each asymptotic encoder and weighted-sum block, preventing coherent error buildup. This advantage is observed despite the application of identical error mitigation techniques to both algorithms, including dynamical decoupling \cite{dd}, Pauli twirling \cite{twirl}, TREX measurement mitigation \cite{trex}, and zero-noise extrapolation \cite{zne}, in accordance with IBM's standard protocols (\href{https://quantum.cloud.ibm.com/docs/guides/error-mitigation-and-suppression-techniques}{error mitigation}). The experiments were conducted on the highest-performing QPU available at the time of the research, with results selected from two independent runs within the paper’s computational budget. It should be noted that a comprehensive analysis of the contributions of individual error mitigation techniques is beyond the current scope but constitutes valuable future work.

\section{Discussion}
The quantum approximate walk algorithm signifies an advancement in quantum optimization, illustrating that mid-circuit measurements, when combined with adaptive weighted-sum learning, can effectively capture and exploit multivariate correlations within optimization landscapes.

We demonstrated that our proposed quantum oracle could effectively capture and exploit multivariate correlations within optimization landscapes, as evidenced by experimental results conducted on state-of-the-art IBM quantum hardware. An additional inquiry concerns whether \ours replicates learning processes. The expectation formula provided in the theorem yields interpretable parameters that can be identified as the mathematical mechanism of the recursive weighted summation block encoding. A notable challenge for variational circuit learning-based approximation algorithms is the difficulty in probing the direct approximation, although implicit capturing of the Hilbert space pattern has been achieved.

Remarkably, \ours establishes a connection between classical and quantum data mapping, thereby enabling the calculation of the distribution of approximations. Consequently, we demonstrate the task of optimizing a financial portfolio using real stock data. To the best of our knowledge, the numerical results reveal that the approximated walk technique facilitates the efficient learning of a diversified portfolio based on modern financial investment theory \cite{markowitz1952portfolio}, thus paving the way for a feasible quantum computing advantage.

\section{Data availability}
The data presented in the figures and that support
the other findings of this study can be found publicly
available for download on https://zenodo.org/records/17419491 upon publication.
\section{Code availability}
The numerical simulation and plots reproduction codes for this study can be found publicly available for download from https://github.com/gzquse/qawa\_data\_circ upon publication.
\section{Funding Statement}
This research used resources of the National Energy Research
Scientific Computing Center, a DOE Office of Science User Facility
This study was supported by the Office of Science of the U.S. Department of Energy
under Contract No. DE-AC02-05CH11231 using NERSC award
NERSC DDR-ERCAP0034486.
\section{Acknowledgment}
 The authors acknowledge the High Performance Computing Center (HPCC) at Texas Tech University for providing computational resources that contributed to the research results reported in this paper. URL: http://www.hpcc.ttu.edu.

\section{Author Contribution}
Z.Q. and Z.P. conceived the project. Guided by Z.P., W.H., A.K., and J.B., Z.G. carried out all analyses, computations, and experiments, generating the data and figures. J.B. contributed critical conceptual refinements. Z.G. and Z.P. drafted the manuscript, with feedback from all authors.
\section{Methods}
\label{med}

\subsection{Metrics of fidelity}
\label{med:metric}
We defined four metrics to leverage the quality of our algorithm. Based on the metrics, we can observe whether our proposed algorithm can achieve quantum approximation. Therefore, an extensive study on how the weights evolved and the algorithm evaluation metrics is shown in \cref{app:extexp}. 
\begin{itemize}
    \item \textbf{Bayesian Update}: Posterior probability computed via quantum-enhanced Bayes' theorem
    \begin{equation}
        P(c=1|x) = \frac{|\langle x|\psi_1\rangle|^2 \cdot \pi}{\sum_{c'} |\langle x|\psi_{c'}\rangle|^2 \cdot P(c')},
    \end{equation}
where quantum measurement outcomes $|\langle x|\psi_c\rangle|^2$ replace classical likelihoods.
    \item \textbf{Weight Reconstruction}: $L_2$ error between recovered and target weights:
$\mathcal{E}_{\text{weights}} = \|\vec{w}_{\text{recovered}} - \vec{w}_{\text{target}}\|_2$
with recovery formula $w_i^{\text{rec}} = \frac{1}{2}(1 - \cos(\alpha_i))$ from learned rotation angles.
    \item \textbf{Copula Invariance}: Wasserstein distance between empirical and theoretical copulas
\begin{equation}
    d_{\text{copula}} = \int_{[0,1]^2} |C_{\text{exp}}(u,v) - C_{\text{thm}}(u,v)| \, du \, dv
\end{equation}
measuring the correlation structure preservation independent of the marginal distributions. Note that we compare the absolute difference between the experimental and theoretical results. 
    \item \textbf{Convergence}: Relative performance bound verification
$\epsilon = 1 - \frac{\langle H_C \rangle_{\text{QAWA}}}{\langle H_C \rangle_{\text{QAOA}}}$
confirming the theoretical advantage through cost function expectation values.
\end{itemize}

\subsection{Proof of Claim 1}
\label{med:proof_1}
\begin{proof}
Given by the Bayesian decomposition of the joint distribution $P(x_1, \ldots, x_n)$ produced by measuring the QAOA state, we can express this as 
\begin{equation}
    P(x_1, \ldots, x_n) = C(F_1(x_1), \ldots, F_n(x_n)) \prod_{i=1}^n p_i(x_i),
\end{equation}
where $p_i$ is the marginal probability mass function and $C$ is the associated copula function according to Sklar's theorem. Since the QAOA state $|\psi\rangle$ is a pure state, the measurement statistics are governed by Born's rule
$P(x) = |\langle x | \psi \rangle|^2$. The correlations embedded in this distribution arise from the entanglement structure generated by the cost Hamiltonian evolution $U_C(\gamma) = e^{-i\gamma H_C}$, where $H_C$ contains the two-body interaction terms $J_{ij} Z_i Z_j$ which create quantum correlations between qubits $i$ and $j$.
To establish convergence, observe that the empirical copula process $\sqrt{K}(\hat{C}_K - C)$ forms a tight sequence in $\ell^\infty([0,1]^n)$ by the multivariate empirical process theory of van der Vaart and Wellner. Specifically, for any measurable subset $A \subseteq [0,1]^n$, the strong law of large numbers guarantees that $\hat{C}_K(A) \to C(A)$ almost surely as $K \to \infty$. Furthermore, the rate of convergence is governed by the Dvoretzky–Kiefer–Wolfowitz (DKW) inequality extended to the multivariate case
\begin{equation}
P\left(\sup_{u \in [0,1]^n} |\hat{C}_K(u) - C(u)| > \epsilon\right) \leq 2^n \exp(-2K\epsilon^2).
\end{equation}

Here, the key insight is that the mid-circuit measurement protocol with weighted-sum blocks (see $\Usum$ in \cref{tab:uintary}) enables direct access to the linear functionals of the copula. Specifically, for weights $w = (w_1, \ldots, w_m)$ with $\sum_i w_i = 1$, the expectation $\mathbb{E}[\sum_{i=1}^m w_i x_i]$ can be expressed as
\begin{align}
&\mathbb{E}\left[\sum_{i=1}^m w_i x_i\right]\\ 
&= \int_{[0,1]^n} \left(\sum_{i=1}^m w_i F_i^{-1}(u_i)\right) dC(u_1, \ldots, u_n),
\end{align}
where $F_i^{-1}$ denotes the quantile function of the $i$-th marginal. This integral representation reveals that the weighted sum directly probes the copula structure through a linear functional, and the variational optimization of the weights $\{w_i\}$ effectively learns the correlation pattern encoded in $C$.

By using Bayesian interpretation when conditioned on the coin ancilla state. Let $P(x|c=0)$ represent the original QAOA distribution and $P(x|c=1)$ the distribution after the weighted-sum transformation. The posterior distribution $P(c|x)$ quantifies the degree to which the observed bit string $x$ aligns with the learned correlation structure. The Bayes theorem states
\begin{align}
&P(c=1|x)\\ 
&= \frac{P(x|c=1)P(c=1)}{P(x|c=0)P(c=0) + P(x|c=1)P(c=1)},
\end{align}
where $P(c=1) = \sin^2\theta$ is the prior probability controlled by the coin parameter $\theta$. The maximum a posteriori estimate selects bit strings that maximize the likelihood ratio $P(x|c=1)/P(x|c=0)$, thereby identifying the samples that best reflect the learned correlations.

Therefore, the cascade of weighted-sum blocks acts as a nonparametric estimator of the copula density $c(u) = \partial^n C/\partial u_1 \cdots \partial u_n$, with the weights serving as variational parameters optimized to match the empirical correlation structure observed in the QAOA samples. We conclude that \ours provides a unique encoded correlation when sufficient qubits are provided for the weighted summation learning layer.
\end{proof}

\subsection{Hidden Linear Approximation}
\label{para:hidden}Here, we also establish the covariance structure with copulas for the dependence between each output. Let define the marginal distribution $X_i$ as the random variable associated with each component $x_i^{(k)}$. The copula function C connects the marginal distributions, which captures the joint distribution $F$ such that, 
\begin{equation}
    F(x_1, x_2, ..., x_n) = C(F_1(x_1), F_2(x_2), ..., F_n(x_n)),
    \label{eq:cor1}
\end{equation}
where $F_i$ represents the marginal cumulative distribution function of each variable $X_i$.
To calculate the output correlations, we used the learning terms with the selu activation function \cite{selu} combined with each linear block summation to represent the multilayer perceptron. We defer the activation encoding in \cref{app:act} for further details. The proposed learning formulation can be extended in complexity by introducing layer structures to the correlation model, thus by combining $a(w_1' w_1 x_1, w_2' w_2 x_2, \cdots ,w_n x_n)$ with post-measured bit strings results, here we produce 
\begin{equation}
    f(S) = a\left(w_1^p x_1 + w_2^p x_2 + \ldots + w_n^p x_n\right).
    \label{eq:fs}
    \end{equation}
\subsection{Learning Dynamics and Convergence Properties}
\label{med:conv}
The general quantum convergence methodology is described in \cite{wierichs2022general}. Starting from the loss function defined in \cref{eq:wcost},
the gradient with respect to weights is
\begin{align}
&\nabla_{w_i} \mathcal{L} = \frac{2}{N} \sum_{j=1}{N} \left(Y{(j)} - Y_{\text{target}}\right) \\
&\cdot a'(\sum_k w_k x_k{(j)}) \cdot x_i{(j)}
\end{align}
where $a'$ is the derivative of activation function. Under gradient descent with learning rate $\eta$
\begin{equation}
w_i{(t+1)} = w_i{(t)} - \eta \nabla_{w_i} \mathcal{L}
\end{equation}
The convergence rate follows
\begin{equation}
\mathcal{L}(t) \leq \mathcal{L}(0) \times \exp\left(-\frac{2\mu t}{L}\right)
\end{equation}
where $\mu$ is the convexity parameter, determined by the smallest eigenvalue of the Hessian matrix $\mathbf{H} = \frac{1}{N}\sum_j \mathbf{x}^{(j)}(\mathbf{x}^{(j)})^T$
$L$ is the Lipschitz constant, bounded by $L \leq \|a'\|_{\infty} \cdot \lambda_{\max}(\mathbf{H})$
For portfolio optimization with correlation-based features, we empirically observe $\mu/L \approx 0.15$, ensuring convergence within $O(\log(1/\epsilon))$ iterations to achieve $\epsilon$-accuracy.

\bibliographystyle{unsrt}
\bibliography{main}

\onecolumngrid
\appendix
\numberwithin{equation}{section}
\renewcommand{\theequation}{\thesection\arabic{equation}} 

\section{State vector analysis for \ours}
\label{app:qawa}

 In the quantum approximate optimization algorithm (QAOA), the oracle hides the linear transformation and correlation, meaning that we cannot obtain the guaranteed best approximation ratio as defined by the quadratic unconstraint binary problem. Although the real quantum advantage from speeding-up and exponential exploratory Hilbert space is proven to be achievable in 2D error-corrected QLDPC codes \cite{he2025extractors}, we raise one core question: how can we find the linear dependence of each probability distribution of each outcome?
 We now analyze the hidden linear approximation function in the Bayesian test. In particular, we consider the initial parameterized ansatz measured results as the prior belief such that the mid-circuit measurement gives the likelihood that each bit string outcome probability, where the QAOA ansatz is directly matched to the QUBO matrix. Then, the posterior probability that the approximation ansatz is correct can be evaluated using the hidden linear correlation quantum oracle.  
In the remainder of the Appendix, we provide all the components required to support the main theorem given by \cref{eq:selu}. 

\subsection{Asymptotical encoder}
\label{app:enc}
In this section, we introduce a new posterior probability called the asymptotic encoder (AE). To recall the question in the main text regarding how we evaluate the copula of each possible outcome, we first provide the formulation of the correlation encoder layer (CEL) with a controlled unitary gate operated with sign negation conditioned on the mid-circuit measurement output from the approximation ansatz. 
Detailedly, let $|\psi_{QAOA}\rangle$ denote the normalized quantum state output from the QAOA ansatz, which can be expressed in the computational basis as  
\begin{equation}
    |\psi_{QAOA}\rangle = \sum_{j=0}^{7} c_j |j\rangle,
    \label{eq:qaoa}
\end{equation}
where the coefficients $c_j$ satisfy the normalization condition
\begin{equation}
    \sum_{j=0}^{7} |c_j|^2 = 1.
\end{equation}
Here, we emphasize that the example state we choose three qubits as an instance.
Upon mid-measurement of the three-qubit state, we aim to obtain a measurement outcome $m$, which can be any of the bitstrings $|m\rangle$. The probability $P(m)$ of obtaining measurement outcome $m$ is given by
\begin{equation}
    P(m) = \langle \psi_{QAOA} | P_m | \psi_{QAOA} \rangle,
\end{equation}
where $P_m$ is the projector corresponding to measurement outcome $m$.
Using the computed probability, the unnormalized state resulting from measuring $m$ is
\begin{equation}
    |\psi'_{unnorm}\rangle = P_m |\psi_{QAOA}\rangle.
\end{equation}
To obtain the normalized state, we divide by the square root of the probability of measuring $m$.
\begin{equation}
    |\psi'\rangle = \frac{P_m |\psi_{QAOA}\rangle}{\sqrt{P(m)}}.
\end{equation}
We begin the beige background noted encoding layer \cref{fig:main} procedure by applying the Ry gates to encode classical data into our measurement results. We assume the classical variable values $x_1, x_2, x_3$ corresponding to each of the three qubits.
Each classical variable $x_j$ (for qubit $j$) is transformed through the following Ry rotation
\begin{equation}
    R_y\left(\alpha_j\right) = \cos\left(\frac{\alpha_j}{2}\right)|0\rangle + \sin\left(\frac{\alpha_j}{2}\right)|1\rangle.
\end{equation}
For the three-qubit state, we apply the Ry gates in succession
\begin{equation}
    |\psi_{encoded}\rangle = R_y(\alpha_1) \otimes R_y(\alpha_2) \otimes R_y(\alpha_3) |\psi'\rangle.
\end{equation}
Hence, following by \cref{eq:qaoa}, the total encoded state evolves to
\begin{equation}
    |\psi_{encoded}\rangle = \sum_{j=0}^{7} c_j' |j\rangle,
    \label{eq:encoded}
\end{equation}
where $c_j'$ denotes the coefficients modified by the rotation $R_y(\alpha_j)$. Therefore, we give
\begin{lemma}
Given a three-qubit quantum state \( |\psi_{\text{encoded}}\rangle \) and measurement results \( m = (m_1, m_2, m_3) \in \{0, 1\}^3 \), the final quantum state \( |\psi_{\text{final}}\rangle \) after applying conditional X operations based on measurement outcomes can be expressed as
\begin{equation}
    |\psi_{\text{final}}\rangle = X^{m_1} \otimes X^{m_2} \otimes X^{m_3} |\psi_{\text{encoded}}\rangle,
\end{equation}
where \( X^{m_j} \) indicates that the X gate is applied if \( m_j = 0 \) and not applied if \( m_j = 1 \) and the final state is determined by the specific outcomes \( m_j \) of the measurement.
\end{lemma}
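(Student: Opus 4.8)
The plan is to reduce the statement to the elementary fact that single-qubit operations on disjoint tensor factors commute and compose into a tensor product, once the feedforward rule has been made precise. First I would fix the convention explicitly: upon recording the classical outcome string $m=(m_1,m_2,m_3)$, the protocol routes each bit $m_j$ into a classically-controlled single-qubit gate on wire $j$ which, by the convention of the $\Uf$ block in Circuit~\ref{circ:oracle}, applies $X$ precisely when $m_j=0$ and the identity when $m_j=1$; write $X^{m_j}$ for this branch-dependent gate (so $X^{0}:=X$, $X^{1}:=I$). Because the three controlled gates act on mutually disjoint qubits, they commute and their ordered composition is, by construction, the operator $\mathcal{U}_m = X^{m_1}\otimes X^{m_2}\otimes X^{m_3}$ on the three-qubit register.

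Next I would apply $\mathcal{U}_m$ to $|\psi_{\text{encoded}}\rangle$. Expanding $|\psi_{\text{encoded}}\rangle=\sum_{j=0}^{7}c_j'|j\rangle$ from \cref{eq:encoded} in the computational basis and using that a single $X$ on wire $k$ implements $|b_1b_2b_3\rangle\mapsto|b_1,\dots,b_k\oplus1,\dots,b_3\rangle$, one sees that $\mathcal{U}_m$ permutes the basis labels by the fixed bitwise rule $b\mapsto b\oplus\overline{m}$, where $\overline{m}$ complements exactly the positions with $m_j=0$. Hence $|\psi_{\text{final}}\rangle=\mathcal{U}_m|\psi_{\text{encoded}}\rangle=\sum_{j=0}^{7}c_j'\,|j\oplus\overline{m}\rangle$, which is the claimed identity and also makes manifest that the output depends on the realized measurement only through the deterministic relabeling induced by $m$, i.e. the final state is determined by the specific outcomes $m_j$.

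Finally, since the conditional $X$ acts on a definite post-measurement branch already normalized in $|\psi'\rangle$ (and thus in $|\psi_{\text{encoded}}\rangle$), no additional renormalization is needed: the quantum-instrument description of measurement-plus-feedforward collapses, on this branch, to the single unitary $\mathcal{U}_m$, which completes the argument. The only genuine subtlety — hardly an obstacle — is the bookkeeping of the nonstandard exponent convention, so that $X^{m_j}$ here agrees with the classically-controlled $X$ appearing in the oracle circuit rather than with the usual $X^0=I$; once that convention is displayed, the remainder is a one-line tensor-factorization computation.
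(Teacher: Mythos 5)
Your proposal is correct and follows essentially the same route as the paper's own proof: fix the (nonstandard) convention $X^{0}=X$, $X^{1}=I$, observe that the classically-controlled gates on disjoint wires compose into the tensor product $X^{m_1}\otimes X^{m_2}\otimes X^{m_3}$, and apply it to the already-normalized post-measurement state. Your version is in fact slightly more careful than the paper's, which asserts ``$X|1\rangle=|1\rangle$'' where it means the identity is applied, whereas you make the branch-dependent gate and the induced basis relabeling $b\mapsto b\oplus\overline{m}$ explicit.
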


\begin{proof}
Consider the quantum state \( |\psi_{\text{encoded}}\rangle \) prior to the measurement, represented as an arbitrary superposition of the three-qubit basis states in \cref{eq:qaoa}.
The measurement outcomes for each qubit yield \( m_1, m_2, m_3 \). For each qubit state, we analyzed the action of the conditional X operation based on the measurement result. If \( m_j = 0 \), the X gate flips the \( |0\rangle \) state to \( |1\rangle \)
   \begin{equation}
       X|0\rangle = |1\rangle
   \end{equation}
If \( m_j = 1 \), the state remains unchanged
   \begin{equation}
       X|1\rangle = |1\rangle
   \end{equation}
Then, we apply the the tensor product of \(X\) gates conditioned on measurement outcomes
\begin{equation}
    |\psi_{\text{final}}\rangle = (X^{m_1} \otimes X^{m_2} \otimes X^{m_3}) |\psi_{\text{encoded}}\rangle,
\end{equation}
where \( X^{m_j} \) is defined as
\begin{equation}
    X^{m_j} = 
\begin{cases} 
X & \text{if } m_j = 0 \\
I & \text{if } m_j = 1
\end{cases}
\end{equation}
Here, we note that the specific outcome determines the resulting quantum state components. If all measurement results are zero, that is, \( m_1 = 0 \), \( m_2 = 0 \), \( m_3 = 0 \)
\begin{equation}
    |\psi_{\text{final}}\rangle = (X \otimes X \otimes X) |\psi_{\text{encoded}}\rangle.
\end{equation}
\end{proof}
To recover the rotation angles after the encoding layer, we estimated the probability of each computational basis string from repeated projective measurements of the final state.  For the three-qubit example, the encoded state \cref{eq:encoded} can be written as
$$|\psi_{\text{encoded}}\rangle
   =\sum_{k=0}^{3}\Bigl(
        c_{2k}^{\prime}\,|{2k}\rangle
      + c_{2k+1}^{\prime}\,|{2k\!+\!1}\rangle
     \Bigr),$$
where the amplitudes of the even (odd) computational strings are, by construction, proportional to $\cos(\alpha_{k}/2)$ $\bigl(\sin(\alpha_{k}/2)\bigr)$.  Let
$$p_{2k}\;=\;\Pr\!\bigl(|{2k}\rangle\bigr)=|c_{2k}^{\prime}|^{2},
\quad
p_{2k+1}= \Pr\!\bigl(|{2k\!+\!1}\rangle\bigr)=|c_{2k+1}^{\prime}|^{2},
\qquad k=0,1,2,3,$$
denotes the empirical probabilities obtained from $N_{\text{shot}}$ repetitions of the circuit.
Using the same argument as in \cref{eq:frqi}, each rotation angle is uniquely determined (up to statistical noise) by
\begin{equation}
\widehat{\alpha}_{k}= 2\,\arctan\!\sqrt{\frac{p_{2k+1}}{p_{2k}}}\;,
\qquad k=0,1,2,3,
\label{eq:frqi}
\end{equation}
provided that $\alpha_{k}\in[0,\pi]$.
If the raw classical data $\{x_{k}\}$ do not naturally fall into this interval, we rescale them via $\alpha_{k}=x_{k}/A$ with a global constant $A>0$ chosen such that $\alpha_{k}\le\pi$ for all $k$.
Hence, by combining mid-circuit postselection with a final round of projective measurements, the complete set of encoded classical variables can be reconstructed from the observed output statistics of the quantum circuit.

The classical Bayes rule can be extended to the measurements of N copies of a system given by the who may perform the general strategies allowed by quantum mechanics via the principle of state exchangeability; see the formal N-dimension definition in \cite{Schack_2001}. The Q-GP-UCB provides the regret upper bound to enable the optimizability that is achivable through the aid of quantum computing \cite{NEURIPS2023_401aa72e}. Here, we provide the quantum Bayesian updating assessment rule for the proposed method.
\begin{dfn}
Let us define the distribution \( P(\theta) \) before making observations. The likelihood is expressed as \[P(m | \theta) = \langle \psi_{QAOA} | P_m | \theta \rangle.\] By applying Bayes' theorem, the posterior distribution is given by \[P(\theta | m) = \frac{P(m | \theta) P(\theta)}{P(m)}.\] Through iterative updates of beliefs based on measurement outcomes, the classical encoded angle \( \theta \), which indicates the success ratio of the approximated algorithm results.
\end{dfn}

\subsection{Weighted hidden linear function}
\label{app:whlf}
In this section, we provide the algorithm that supports finding the hidden linear dependence of the quantum approximation outcome establishing that the
switchable weighted–sum module displayed in \cref{fig:main}. This is capable of
(i) realizing an arbitrary continuous linear correlation of selected
marginals of the QAOA state and (ii) embedding this correlation in a
Bayesian mixture that simultaneously contains the unaltered QAOA
statistics \cite{nielsen2010quantum,hoeting1999bayesian}.

Throughout, 
\( \bigl|\psi_{\mathrm{QAOA}}(\boldsymbol{\gamma},\boldsymbol{\beta})\bigr\rangle \)
denotes the $p$–layer QAOA state on $n$ computational qubits as introduced
in \cite{guo2025direct, qaoa}. 
Single–qubit observables are abbreviated by
\(
x_i := \langle Z_i \rangle_{\psi_{\mathrm{QAOA}}}
\) with $0\le i\le n-1$, and
$\rho$ represents the density operator of the complete register.

\begin{table}[htbp]
    \centering
    \begin{tabular}{@{}ccc@{}}
        \toprule
        \textbf{Unitary} & \textbf{Expression} & \textbf{Matrix Representation} \\ 
        \midrule
        \(\Uprod\) & 
        \(\cx_{0,1}\cdot \left[ I \otimes  \Rz\left(\frac{\pi}{2}\right) \right] = e^{-i\frac{\pi}{4}}\) & 
        \(\begin{bmatrix}
            1 & 0 & 0 & 0 \\
            0 & i & 0 & 0 \\
            0 & 0 & 0 & i \\
            0 & 0 & 1 & 0 \\
        \end{bmatrix}\) \\

        \(\Usum(w)\) & 
        \(\left[ \Ry\left(\frac{-\alpha}{2}\right) \otimes I \right] \cdot \cx_{1,0} \cdot \left[ \Ry\left(\frac{\alpha}{2}\right) \otimes I \right] \cdot \Uprod\) & 
        \(e^{-i\frac{\pi}{4}} \begin{bmatrix}
            1 & 0 & 0 & 0 \\
            0 & i\sqrt{w} & \sqrt{1-w} & 0 \\
            0 & 0 & 0 & i \\
            0 & i\sqrt{1-w} & -\sqrt{w} & 0 \\
        \end{bmatrix}\) \\

         \(U_{flip}\)& X  & 
        \(\begin{bmatrix}
            0 & 1 \\
            1 & 0 
        \end{bmatrix}\) \\ 
        \bottomrule
    \end{tabular}
    \caption{The quantum arithmetic unitaries for \ours. Note that, the detailed proof is shown in Appendix A from \cite{balewski2025ehands}.}
    \label{tab:uintary}
\end{table}
\paragraph{Exact action of a single weighted–sum block}
The elementary building block is shown in \cref{tab:uintary}, where we assume the given quantum oracle Circuit~\ref{circ:oracle} contains freely tunable real parameters \(w\in[0,1]\) and the fixed gate
\(U_{\mathrm{prod}}=\mathrm{CX}_{0,1}\,
                     (\mathbbm{1}\otimes R_z(\pi/2))\).
Because every constituent is a Clifford or single–qubit rotation, the full
operator is unitary.
Let us consider a diagonal two–qubit input state of the form
\(
\rho_{\text{in}} =
      x_1 \lvert10\rangle\!\langle10\rvert
    + x_2 \lvert01\rangle\!\langle01\rvert .
\)
Direct multiplication with the matrix representation given in the
introduction shows
\begin{equation}
  \mathrm{Tr}\!\bigl[
      Z_0\,U_{\mathrm{sum}}(w)\,
      \rho_{\text{in}}\,
      U_{\mathrm{sum}}^{\dagger}(w)
  \bigr]
  = w\,x_1 + (1-w)\,x_2 ,
  \label{eq:SingleBlockResult}
\end{equation}
so a single block performs the exact convex combination of two input
expectation values with a weight $w$.  As Eq.\,\eqref{eq:SingleBlockResult} is
analytic in $w$, the map is continuous over the entire interval $[0,1]$.

Placing the output of one weighted–sum block into the input of a second
independent block with internal parameter $w_0$ yields, after repeating the
same algebra, the relation
\begin{equation}
  \hat{s}_3
  = w_1 w_0\,x_0 + w_1(1-w_0)\,x_1 + (1-w_1)\,x_2 .
  \label{eq:ThreeTermSum}
\end{equation}
Defining the effective coefficients
\(w'_0 = w_1 w_0,\,
  w'_1 = w_1 (1-w_0),\,
  w'_2 = 1 - w_1\)
reveals that 
$\bigl(w'_0,w'_1,w'_2\bigr)$ lives in the two–simplex
\(
   \Delta_2 =
     \{(\lambda_0,\lambda_1,\lambda_2)\mid
       \lambda_k\ge0,\;
       \sum_k\lambda_k=1\}.
\)
The mapping
$(w_0,w_1) \mapsto (w'_0,w'_1,w'_2)$
is smooth and surjective onto $\Delta_2$; hence every convex combination of
three final approximate marginals can be generated. Induction over the number of
concatenated blocks extends this result to an arbitrary
$(m-1)$–fold cascade, which spans the complete
$(m-1)$–simplex $\Delta_{m-1}$.  The construction therefore
realizes \emph{all} convex coefficients that appear in a
linear correlation of $m$ single–qubit expectation values.

\subsection{Coin ancilla and Bayesian model averaging}
\label{app:coin}
To decide dynamically whether the weighted–sum cascade is executed, we
attach a single ancilla prepared in the superposition
\(
  \lvert C(\theta)\rangle
  = \cos\theta\,\lvert0\rangle
  + \sin\theta\,\lvert1\rangle.
\)
Controlled on the coin being $\lvert1\rangle$ we apply the full cascade,
otherwise we skip it.  For the joint state of coin and data register this
choice leads to
\begin{equation}
  \rho_{\text{out}}
  =
  \cos^2\theta\;
    \lvert0\rangle\!\langle0\rvert_{\text{c}}
      \otimes
    \rho_{\mathrm{QAOA}}
  +
  \sin^2\theta\;
    \lvert1\rangle\!\langle1\rvert_{\text{c}}
      \otimes
    U_{\mathrm{sum}}\,\rho_{\mathrm{QAOA}}\,
    U_{\mathrm{sum}}^{\dagger}.
  \label{eq:rhoOut}
\end{equation}
Tracing over the coin and measuring $Z_0$ on the first computational qubit
gives
\begin{equation}
  \langle Z_0 \rangle_{\text{out}}
  =
    \cos^2\theta\,\langle Z_0\rangle_{\psi_{\mathrm{QAOA}}}
    +\sin^2\theta\,\hat{s}_3 ,
  \label{eq:Zmix}
\end{equation}
where $\hat{s}_3$ is the convex combination defined in
Eq.\,\eqref{eq:ThreeTermSum}. 
Equation~\eqref{eq:Zmix} is precisely the Bayesian model averaging formula that the unoptimized learning space and learned linear correlation, the prior probabilities
being $\cos^2\theta$ and $\sin^2\theta$, respectively.
First, continuity follows from the observation that every trainable
parameter enters only through analytic trigonometric functions; hence the
mapping from parameters to measurement statistics is continuous over the
compact domain 
$[0,\pi/2]\times[0,1]^{m-1}$.
Secondly, completeness with respect to convex correlations is guaranteed
because an $(m-1)$–fold cascade reaches every point of the simplex
$\Delta_{m-1}$, establishing surjectivity onto the full set of admissible
coefficients.  If one supplements the cascade by an overall single–qubit
rotation about the $x$–axis, available in any universal gate set
\cite{nielsen2010quantum}, also a global real prefactor can be realised; the
linear span of the selected $\{x_i\}$ is therefore covered.  Thirdly,
The coin ensures the original QAOA distribution as follows
setting $\theta=0$ leaves the parent state unmodified, whereas
$\theta=\pi/2$ hands full control to the weightedsum circuit.  As a
consequence, any target distribution that can be expressed either as the
raw QAOA output or as a convex linear combination of its single–qubit
marginals lies within the reachable set of the proposed variational
architecture.

\subsection{Activation encoding}
\label{app:act}
The Scaled Exponential Linear Unit (SELU) \cite{selu} activation function is defined as
$$\text{selu}(x) = \begin{cases}
\lambda x & \text{if } x > 0 \\
\lambda \alpha (\exp(x) - 1) & \text{if } x \leq 0
\end{cases}$$
where $\lambda$ and $\alpha$ are fixed constants (for instance, $\lambda \approx 1.0507$ and $\alpha = 1.6733$).
Let us consider the objective of encoding values of $x$ (within the range of [-1, 1]) into a quantum state, transforming these values using the SELU function, and then applying the inverse cosine to retrieve the encoded result. Therefore, the encoding function with parameterized $\Ry$ rotation gate applied to single quantum state is initialized by
\begin{equation}
    \label{eq:selu}
    cos(\theta) = selu(x) \quad \leftrightarrow \quad \theta = \arccos(selu(x)), \qquad |0\rangle \xrightarrow{R_y(\theta)} |\phi\rangle.
\end{equation} 
Here, we note that the parameterized rotation gate selection is not unique because the corresponding measurement basis is applied when we select the rotation bases. 
For the decoding phase, we simply measure the qubit in the state $|\phi\rangle$ with Z basis $\langle z \rangle = \langle \phi | O_z | \phi \rangle$. The resulting \cref{circ:selu} shows the $q_{init}$ ($\ket{0}$) after the unitary and measurement yielding back $\langle z \rangle= selu(x)$.
\begin{circuit}[htbp]
\centering
    \begin{tikzpicture}
        \node[scale=1.0] (first) {
            \begin{quantikz}[row sep=22pt, column sep=10pt]
                \lstick{$\qubit_{init}$} & \gate{\rya{\theta}} & \rstick{$|\phi\rangle$}
            \end{quantikz}
        };
        \node[scale=0.99, right=.1cm of first] (second) {  
            $\Rightarrow$ 
            \begin{quantikz}[row sep=22pt, column sep=10pt]
                \lstick{$\qubit_{init}$} & \gate{\rya{\arccos(\text{selu}(x))}} & \rstick{$|\phi\rangle$}
            \end{quantikz}
        };
        \node[scale=0.99, right=.1cm of second] (third) {
            $\Rightarrow$
            \begin{quantikz}[row sep=22pt, column sep=10pt]
                \lstick{$|\phi\rangle$} & \measuretab{Z} 
            \end{quantikz}
        };
        \node[scale=0.99, right=0.1cm of third] (fourth) {
            $O_z = \begin{pmatrix}
                1 & 0 \\
                0 & -1
            \end{pmatrix}$
        };
    \end{tikzpicture}
    \caption{The encoding phase embed real number x into Hilber space with the decoding phase performed by the Z measurement basis.}
    \label{circ:selu}
\end{circuit}
To achieve the nonlinear transformation, the natural way is to multiply the encoded selu function with the current weighted linear operation (see \cref{app:whlf}). The $\Uprod$ is defined in \cref{tab:uintary}. Next, following \cref{fig:main}, we expand the activation layer to the general multiple-qubit scenario. Given by the encoding function \cref{eq:selu}, the n qubits quantum state can be represented as 
$|\psi_x\rangle = R_y(\theta_x) |0\rangle$. Hence, we give the multiplicand qubits state 
$| \psi_1 \rangle = \rya{\arccos{(selu{(x)})}} |0\rangle, \quad | \psi_2 \rangle = R_y(\theta_2) |0\rangle, \quad \ldots, \quad | \psi_n \rangle = R_y(\theta_n) |0\rangle$.
Then, for better visualization of the state vector, we denote the activation state is acting on three encoder
\begin{align}
    \ket{\psi_a}&=\ket{\psi_{selu}}\Uprod\cdot\ket{\phi} \\&= \Uprod \ket{x_0} \ket{x_1} \ket{x_2} \ldots \ket{x_{n-1}}\\
    &= \ket{\psi_{selu}}\frac{e^{-i\frac{\pi}{4}}}{2} \begin{bmatrix}
\sqrt{1 + x_0} \sqrt{1 + x_1} \sqrt{1 + x_2} \\
i \sqrt{1 + x_0} \sqrt{1 + x_1} \sqrt{1 - x_2} \\
i \sqrt{1 + x_0} \sqrt{1 - x_1} \sqrt{1 - x_2} \\
\sqrt{1 - x_0} \sqrt{1 - x_1} \sqrt{1 - x_2} \\
\sqrt{1 - x_0} \sqrt{1 + x_1} \sqrt{1 + x_2} \\
\sqrt{1 - x_0} \sqrt{1 + x_1} \sqrt{1 - x_2} \\
\sqrt{1 - x_0} \sqrt{1 - x_1} \sqrt{1 + x_2} \\
\sqrt{1 - x_0} \sqrt{1 - x_1} \sqrt{1 - x_2} 
\end{bmatrix}
\end{align}
Here, we emphasize that the global phase operator $\nicefrac{e^{-i\frac{\pi}{4}}}{2}$ does not affect the final Z-basis measurement since probability measures depend on the inner product of the state vector with the basis states, the inclusion of a global phase factor does not change the outcome
\begin{equation}
   P(\phi) = |\langle \phi | e^{i\phi} |\psi\rangle|^2 = |e^{i\phi}\langle \phi | \psi\rangle|^2 = |\langle \phi | \psi\rangle|^2 
\end{equation}
The global phase $e^{i\phi}$ cancels out in the probability calculation because 
\begin{equation}
    |e^{i\phi}|^2 = 1.
\end{equation}
To recover, we only measure the last qubit shown in \cref{fig:main}, because the controlled $\Uprod$ only acts on the last encoded qubit, as detailed in the red box of \cref{circ:oracle}. 
As a result the expectation observable after the last encoded qubits is 
\begin{equation}
    \langle\psi_a|O|\psi_a\rangle = a(x_0\cdot x_1\cdot x_2\cdots x_n),
\end{equation}
where a is denoted by $selu$. 

\section{Hardware efficiency analysis}
\label{app:hardware}
\textbf{Circuit Depth Advantage.} Quantum state tomography \cite{cramer2010efficient} reconstructs the full density matrix $\rho$ by measuring all $O(4^n)$ Pauli observables, requiring exponentially many measurements and $O(n^2)$ entangling gates per measurement-basis rotation. Classical shadow tomography \cite{aaronson2018shadow} improves the sample complexity to $O(\log M)$ for predicting $M$ observables by randomly sampling from a unitary ensemble, but still requires $O(n \log n)$ circuit depth for implementing random Clifford gates and suffers from exponential post-processing when extracting non-linear functions of the state.
Our core discovery is that \ours bypasses full-state reconstruction entirely by directly encoding correlations in classical registers through mid-circuit measurements. Instead of learning the quantum state $|\psi\rangle \in \mathbb{C}^{2^n}$, this learning technique only picks the correlation weights $\mathbf{w} \in [0,1]^n$ via the quantum-classical mapping $\alpha = \arccos(1 - 2w)$. This dimensional reduction from exponential to linear space enables the weighted-sum oracle to require only $2(n-1)$ CNOT gates implementing the recursive structure $R_y{\otimes 2}|X\rangle R_y{\otimes 2}$ compared to $O(n^2)$ gates for tomography basis rotations or shadow random unitaries. The circuit depth reduction from $O(n^2)$ to $O(n)$ is possible because \ours exploits the copula structure correlations captured through pairwise operations rather than full entanglement \cref{claim:1}. \cref{tab:resources} quantifies these resource improvements, showing that this achieves comparable correlation learning accuracy with quadratically fewer quantum resources.
Therefore, we summarize the advantages of the algorithm below.
\begin{table}[htbp]
\centering
\caption{Resource comparison for learning $n$-qubit correlations}
\label{tab:resources}
\begin{tabular}{lccc}
\toprule
\textbf{Method} & \textbf{Circuit Depth} & \textbf{CNOT Count} & \textbf{Measurements} \\
\midrule
State Tomography \cite{cramer2010efficient} & $O(n)$ & $O(n^2)$ & $O(4^n)$ \\
Shadow Tomography \cite{aaronson2018shadow} & $O(\log n)$ & $O(n \log n)$ & $O(n^2 \log n)$ \\
\ours & $O(n)$ & $O(n)$ & $O(n^2)$ \\
\bottomrule
\end{tabular}
\end{table}
Note that, in our algorithm, For n qubits, there are O(n²) pairwise correlations that corresponds to the measurement output
$\langle Z_0 Z_1 \rangle, \langle Z_0 Z_2 \rangle, ..., \langle Z_{n-2} Z_{n-1} \rangle$
Hence the total is $\binom{n}{2} = \frac{n(n-1)}{2} \approx O(n^2)$ correlation values.

\section{Shallow depth portfolio analysis}
\label{app:ext}
The primary text illustrates the approximation rate for \ours, as derived from the QAOA. \cref{tab:performance} presents the optimization results for up to 20 qubits, utilizing the same experimental shots and circuit configuration as the IBM simulator depicted in (a) of \cref{fig:hardware}. Notably, CPLEX \cite{manual1987ibm} serves as our classical alternative (SA) for benchmarking the shallow quantum circuit simulation results owing to its optimized linear optimization capabilities. The findings indicate that, despite crosstalk errors impeding the scalability of the problem size with an increase in qubits, our algorithm effectively approximates the learned system.
\begin{table}[htbp]
\centering
\begin{tabular}{lcccc}
\toprule
\textbf{Size} & \textbf{QAOA} & \textbf{\ours} & \textbf{SA} & \textbf{Optimum} \\
\textbf{(n)} & \textbf{(p=3)} & \textbf{(p=3)} & & \\
\midrule
4  & 0.892 & \textbf{0.946} & 0.933 & 1.000 \\
8  & 0.847 & \textbf{0.918} & 0.901 & 1.000 \\
12 & 0.803 & \textbf{0.891} & 0.877 & 1.000 \\
16 & 0.761 & \textbf{0.865} & 0.849 & 1.000 \\
20 & 0.724 & \textbf{0.842} & 0.821 & 1.000 \\
\bottomrule
\end{tabular}
\caption{Performance comparison on portfolio optimization benchmarks. Values represent approximation ratio $\langle H_C \rangle / H_{\text{opt}}$.} 
\label{tab:performance}
\end{table}

\section{Extensive experiments for \ours}
\label{app:extexp}
In addition to the quantum experiments presented in \cref{exp}, we provide four metrics to confirm the noise-free robustness and versatility of our algorithm. An overview of the metrics is provided in \cref{tab:qawa_validation}. We utilize the same metrics as defined in \cref{med:metric}: Bayesian update, weight reconstruction, copula invariance, and convergence. We assume that the quantum simulator provides noiseless and all-to-all connections; therefore, the global parameter settings for the metric are eight qubits and eight assets for selection.

Most significantly, \ours immediate performance superiority over QAOA, achieved without additional optimization iterations, proves that the advantage stems from more efficient information extraction rather than better parameter tuning. The weighted-sum encoding pre-embeds the problem structure that QAOA must discover through variational optimization across exponentially many parameters, fundamentally changing the optimization landscape from exploration to refinement.
\begin{table*}[htbp]
\centering
\begin{tabular}{@{}llcc@{}}
\toprule
\textbf{Validation Test} & \textbf{Metric} & \textbf{Expected/Target} & \textbf{Observed/Result} \\
\midrule
\multirow{2}{*}{Bayesian Update} 
    & Mean posterior $P(c=1|x)$ & --- & 0.221 \\
    & Prior expectation & 0.146 & --- \\
\midrule
\multirow{5}{*}{Weighted-Sum Completeness} 
    & Weight $w_1$ & 0.31547798 & 0.3154781 \\
    & Weight $w_2$ & 0.01962489 & 0.20088793 \\
    & Weight $w_3$ & 0.20088612 & 0.20088793 \\
    & Weight $w_4$ & 0.46401101 & 0.46401209 \\
    & $L_2$ error & 0.0 & $< 10^{-6}$ \\
\midrule
\multirow{2}{*}{Copula Invariance} 
    & Distance between copulas & $< 0.05$ & 0.024962 \\
    & Invariance satisfied & \texttt{True} & \texttt{True} \\
\midrule
\multirow{4}{*}{Convergence} 
    & Final QAWA value & --- & 0.8965 \\
    & Final QAOA value & --- & 0.8333 \\
    & Theoretical bound satisfied & \texttt{True} & \texttt{True} \\
\bottomrule
\end{tabular}
\caption{Theoretical validation results for QAWA algorithm components}
\label{tab:qawa_validation}
\end{table*}

\end{document}